\newtheorem{lemma}{Lemma}
\newtheorem{corollary}{Corollary}
\newcommand{\trans}{^{\!\top}}
\newcommand{\ddt}{\frac{\mathrm{d}}{\mathrm{d}t}}
\newcommand{\tr}{\mathrm{tr}}
\newcommand{\dx}{\mathrm{d}\mathbf{x}}
\newcommand{\dy}{\mathrm{d}\mathbf{y}}
\newcommand{\R}{\ensuremath{\mathbb{R}}}
\newcommand{\Hcurl}{\ensuremath{H\left(\textrm{curl};\Omega\right)}}
\newcommand{\Hzerocurlt}{\ensuremath{H_0\left(\textrm{curl};\Omega[t]\right)}}
\newcommand{\Hone}{\ensuremath{H^1\left(\Omega\right)}}
\newcommand{\Hzeroonet}{\ensuremath{H_0^1\left(\Omega[t]\right)}}
\renewcommand{\vec}[1]{\ensuremath{\mathbf{#1}}}
\newcommand{\curl}[1]{\ensuremath{\nabla \times #1}}
\renewcommand{\div}[1]{\ensuremath{\nabla \cdot #1}}
\newcommand{\grad}[1]{\ensuremath{\nabla #1}}
\newcommand{\fembasis}{\ensuremath{N}}
\newcommand{\stiff}{\ensuremath{\mathbf{K}}}
\newcommand{\mass}{\ensuremath{\mathbf{M}}}
\newcommand{\stiffgrad}{\ensuremath{\overline{\mathbf{K}}}}
\newcommand{\massgrad}{\ensuremath{\overline{\mathbf{M}}}}
\newcommand{\stiffHone}{\stiffgrad}
\newcommand{\massHone}{\massgrad}
\newcommand{\parat}{\ensuremath{t}}
\newcommand{\map}{\ensuremath{\mathbf{F}}}
\newcommand{\maptil}{\ensuremath{{\mathbf{G}}}}
\newcommand{\maptilF}{\ensuremath{\tilde{\mathbf{F}}}}
\newcommand{\onref}[1]{\ensuremath{\hat{#1}}}
\newcommand{\ndim}{\ensuremath{3}}
\newcommand{\dimSpline}{\ensuremath{n_\mathrm{dim}}}
\newcommand{\Ngeo}{\ensuremath{n_\mathrm{geo}}}
\newcommand{\Ndof}{{\ensuremath{n_{\mathrm{dof}}}}}
\newacronym{fem}{FEM}{finite element method}
\newacronym{iga}{IGA}{isogeometric analysis}
\newacronym{cad}{CAD}{computer-aided design}
\begin{document}

\title{On the computation of analytic sensitivities of eigenpairs in isogeometric analysis}

\author{Anna Ziegler$^{1,2}$, 
    Melina Merkel$^{1,2}$, 
    Peter Gangl$^{3}$, and 
    Sebastian Schöps$^{1,2}$\\[0.5em]
    \footnotesize$^{1}$ Computational Electromagnetics Group, Technische Universität Darmstadt, 64289 Darmstadt, Germany\\
	\footnotesize$^{2}$ Centre for Computational Engineering, Technische Universität Darmstadt, 64293 Darmstadt, Germany\\
	\footnotesize$^{3}$ Johann Radon Institute for Computational and Applied Mathematics, Austrian Academy of Sciences, 4040 Linz, Austria
}

\date{\today}

\begin{abstract}
The eigenmodes of resonating structures, e.g., electromagnetic cavities, are sensitive to deformations of their shape.
In order to compute the sensitivities of the eigenpair with respect to a scalar parameter, we state the Laplacian and Maxwellian eigenvalue problems and discretize the models using isogeometric analysis.
Since we require the derivatives of the system matrices, we differentiate the system matrices for each setting considering the appropriate function spaces for geometry and solution.
This approach allows for a straightforward computation of arbitrary higher order sensitivities in a closed-form. 
In our work, we demonstrate the application in a setting of small geometric deformations, e.g., for the investigation of manufacturing uncertainties of electromagnetic cavities, as well as in an eigenvalue tracking along a shape morphing. 
\end{abstract}
\begin{keyword}
isogeometric analysis \sep electromagnetism \sep shape derivatives
\end{keyword}
\maketitle

\section{Introduction}
In 2005, Hughes et al. \cite{Hughes_2005aa} introduced the concept of \gls{iga} for the discretization of partial differential equations (PDEs). It is a particular version of the \gls{fem} using the same spline-based basis functions that are employed in \gls{cad} systems, capable of representing many engineering geometries exactly. Another advantage of \gls{iga} is the inter-element smoothness. As a consequence, IGA promises faster convergence, i.e., fewer degrees of freedom are needed for the same accuracy, when compared to traditional \gls{fem} approaches \cite{Hughes_2014aa, Cottrell_2009aa}. The method has been extended beyond applications in solid mechanics for example to electromagnetism~\cite{Vazquez_2010aa} and fluid mechanics \cite{Hoang_2017aa}. In such applications, the (generalized) eigenvalue problem and the spectral properties of the discretization have been a particular topic of interest in the scientific community, see for example \cite{Cottrell_2006aa}. 

In the present paper, we investigate the solution of the eigenvalue problem on a parametrized domain $\Omega[t]$ and its derivative with respect to the parameter~$t$. This is for example relevant when optimizing resonant structures \cite{Lewis_2003aa}, in uncertainty quantification of eigenvalue problems \cite{Georg_2019aa}, or for mode tracking~\cite{Jorkowski_2018aa}. If considering geometries that are exactly represented by our spline-based CAD basis functions, then the geometric deformation and its derivatives can be explicitly expressed in terms of parameter-dependent weights and control points. In the following, we compute those derivatives of the stiffness and mass matrices in $H^1(\Omega)$ and $\Hcurl$ that are necessary to eventually determine the derivatives of eigenvalues and -vectors, {allowing their efficient and accurate approximation also for similar geometries by means of a Taylor expansion.}

The approach we follow for obtaining derivatives of stiffness and mass matrices with respect to geometric changes is closely related to the computation of shape derivatives of PDE-constrained optimization problems \cite{DZ2, SZ}. Also in that context, domains are subject to a transformation (usually represented by the action of some vector field) and sensitivities are obtained by first transforming back to the original domain and differentiating the arising integrand. In the context of IGA, the deformation vector field is typically defined in terms of the given control points and shape deformations are therefore given by a change in control points. In~\cite{Qian_2010aa}, shape sensitivities in an IGA context are considered on a discrete level where the differentiation is carried out with respect to both control points and weights in a NURBS description. Later it was shown in \cite{Fuseder_2015ab} that, under certain assumptions, discretization and shape differentiating commute in the context of IGA, i.e., shape derivatives actually amount to differentiation with respect to control points. For an application of IGA-based shape optimization in the context of electromagnetics, we refer the interested reader to~\cite{MerkelGanglSchoeps2021}.

The paper is structured as follows. 
We introduce the Laplacian and Maxwellian eigenvalue problems and derive their weak formulations and discrete eigenvalue problems in the following.
In Section~\ref{sec:iga}, we review the basics of isogeometric analysis and state the suitable function spaces for both problem types.
Subsequently, we present the closed-form formulation of the derivatives of the system matrices on transformed domains in Section~\ref{sec:diffiga}.
We demonstrate the formulation of the shape morphing as a suitable transformation and extend this by determining higher order derivatives. 
In Section~\ref{sec:numerics}, we show their application in numerical examples.
Finally, we conclude our work in Section~\ref{sec:conclusion}.
\subsection{Problem Setting}
Given a bounded and simply connected domain $\Omega[t] \in \R^\ndim$ parametrized by $t\in[0,1]$, with Lipschitz continuous boundary $\partial\Omega[t]$, we consider the Laplacian and Maxwellian eigenvalue problems with Dirichlet boundary conditions, see e.g. \cite{Boffi_2010aa}. In the first case, we look for eigenpairs $u_t\neq0$ and $\lambda_t$ such that
\begin{equation}\label{eq:Laplace}
	\begin{aligned}
		{-}\div{\left(\grad{u_t}\right)} &= \lambda_t^2 u_t && \text{in }\Omega[t]\\
		u_t &= 0 && \text{on }\partial\Omega[t]
	\end{aligned}
\end{equation}
and in electromagnetics, we seek electric field strengths $\vec{E}_t\neq0$ and eigenvalues~$\lambda_t$ such that
\begin{equation}\label{eq:Maxwell}
	\begin{aligned}
		\curl{\left(\curl{\vec{E}_t}\right)} &= \lambda_t^2 \vec{E}_t && \text{in }\Omega[t]\\
		\vec{E}_t\times\vec{n} &= 0 && \text{on }\partial\Omega[t]
	\end{aligned}
\end{equation}
with outward pointing normal vector $\mathbf{n}$. The resonant frequency of the field in vacuum is easily deduced by $f_t=\frac{\sqrt{\lambda_t}c_0}{2\pi}$ where $c_0$ is the speed of light. 
Let us number the eigenpairs, i.e., $(u_{t,m}$, $\lambda_{t,m})$ and $(\vec{E}_{t,m}$, $\lambda_{t,m})$, such that $\lambda_{t,m-1}\leq \lambda_{t,m}$ for $m=2,3,\ldots$ and note that each pair inherits the dependency on the parameter $t$ from the computational domain~$\Omega[t]$. 

\subsection{Weak formulation}
The well-known variational formulations of the two eigenvalue problems are, see e.g. \cite{Boffi_2010aa}: 
find $\lambda_t\in \R$ and $u_t \in \Hzeroonet$ such that
\begin{equation}\label{eq:Laplace-eig-var}
 \left(\grad{u_t},\grad{\fembasis_t}\right) 
  = \lambda_t \left(u_t,\fembasis_t\right) \quad \forall \fembasis_t \in \Hzeroonet, 
\end{equation}
or find $\lambda_t\in \R$ and $\vec{E}_t \in \Hzerocurlt$ such that
\begin{equation}\label{eq:Maxwell-eig-var}
 \left(\curl{\vec{E}_t},\curl{\vec{\fembasis}_t}\right) 
  = \lambda_t \left(\vec{E}_t,\vec{\fembasis}_t\right) \quad \forall \vec{\fembasis}_t \in \Hzerocurlt, 
\end{equation}
where we made use of the usual function spaces $\Hzeroonet$ and $\Hzerocurlt$ of square-integrable fields with vanishing trace and tangential trace at the boundary, respectively. For further information on function spaces in the
context of Maxwell's equations, the reader is referred to \cite{Monk_2003aa}.

Let us follow the Ritz-Galerkin procedure and introduce a sequence of finite-dimensional spaces  $W[t] \subset \Hzeroonet$ and $\mathbf{W}[t] \subset \Hzerocurlt$ to yield approximate solutions of dimension
$\Ndof = \mathrm{dim} \left(W[t]\right)$ or $\Ndof=\mathrm{dim} \left(\mathbf{W}[t]\right)$, respectively. 
Using basis functions $\fembasis_{t,j}\in W[t]$ and $\mathbf{\fembasis}_{t,j}\in\mathbf{W}[t]$, the unknown fields are approximated by linear combinations of basis functions 
\begin{equation}
	u_t \approx \sum_{j=1}^{\Ndof} u_{t,j} \fembasis_{t,j}
\quad\text{or}\quad
	\vec{E}_t \approx \sum_{j=1}^{\Ndof} u_{t,j} \mathbf{\fembasis}_{t,j}.
\end{equation}
In both cases, the discrete solution
$\mathbf{u}_t=[u_{t,1},\ldots,u_{t,\Ndof}]\trans$
is obtained by solving the generalized eigenvalue problems
\begin{equation}\label{eq:galerkin-substitution}
\stiffgrad[t] \vec{u}_t = \lambda_t^2 \massgrad[t] \vec{u}_t \qquad \mbox{ and } \qquad \stiff[t] \vec{u}_t = \lambda_t^2 \mass[t] \vec{u}_t
\end{equation}
where the stiffness and mass matrices are given by
\begin{align} \label{eq_stiffH1}
	\stiffgrad[t]_{i,j} &= \left(\grad{\fembasis_{t,i}}, \grad{\fembasis_{t,j}}\right), &
	\massgrad[t]_{i,j}  &= \left(\fembasis_{t,i}, \fembasis_{t,j}\right)
\intertext{ or in the case of electromagnetism}
	\stiff[t]_{i,j} &= \left(\curl{\mathbf{\fembasis}_{t,i}}, \label{eq_stiffHcurl} \curl{\mathbf{\fembasis}_{t,j}}\right), &
	\mass[t]_{i,j}  &= \left(\mathbf{\fembasis}_{t,i}, \mathbf{\fembasis}_{t,j}\right)
\end{align}
where $i,j\in\{1,\ldots,\Ndof\}$. In the following, we use only the notation $\stiff[t]$, $\mass[t]$ for the electromagnetic problem, however, the same steps also hold for the $H^1$ matrices $\stiffgrad[t]$, $\massgrad[t]$. Let us normalize the eigenvectors using some vector $\mathbf{u}_{\star}$ 
such that we can formulate the eigenvalue problem as the root finding problem
\begin{equation}
\begin{bmatrix}
\stiff[t] \mathbf{u}_t - \lambda_t \mass[t] \mathbf{u}_t \\
\mathbf{u}_{\star}\trans \mass[t] \mathbf{u}_t - 1
\end{bmatrix} = \begin{bmatrix}
0 \\	 0
\end{bmatrix} \label{eq:eigprob}
\end{equation}
with solutions $(\vec{u}_{t,m}, \lambda_{t,m})$ sorted such that $\lambda_{t,m-1}<\lambda_{t,m}$ for $m=2,\ldots,\Ndof$.
Note, that there are different choices to construct the spaces $W[t]$ and $\mathbf{W}[t]$. 
For \gls{fem} in the case of Maxwell equations, we refer the interested reader
to \cite{Boffi_2010aa,Monk_2003aa}. After discussing the derivatives {of the eigenvalues and eigenvectors}, we will
follow \cite{Cottrell_2006aa,Buffa_2010aa} and introduce the \gls{iga} discretization using B-spline spaces.

\subsection{Derivatives of eigenvalues and eigenvectors}
The $n$-th derivative of the $m$-th eigenvector and eigenvalue with respect to the parameter $t$, i.e.,
$$
	\mathbf{u}_{t,m}^{(n)}=\frac{\mathrm{d}^n}{\mathrm{d}t^n}\mathbf{u}_{t,m}
	\quad\text{and}\quad 
	\lambda_{t,m}^{(n)}=\frac{\mathrm{d}^n}{\mathrm{d}t^n}\lambda_{t,m}\;,
$$
are obtained by differentiating \eqref{eq:eigprob}, where we implicitly assume that these derivatives exist.
Let us assume $\mathrm{d}\mathbf{u}_{\star}/\mathrm{d}t=\textbf{0}$, then we obtain the
{first order derivative by solving the linear system
\begin{equation}
	\begin{bmatrix}
		\stiff[t] - \lambda_{t,m} \mass[t]  &\! -\mass[t]\mathbf{u}_{t,m} \\
		\mathbf{u}_{\star}\trans \mass[t] &0
	\end{bmatrix}
	\begin{bmatrix}
		\mathbf{u}_{t,m}'
		\\
		\lambda_{t,m}'
	\end{bmatrix}=
	\begin{bmatrix}
	-\stiff[t]'\mathbf{u}_{t,m}+\lambda_{t,m}\mass[t]'\mathbf{u}_{t,m} \\
	-\mathbf{u}_{\star}\trans \mass[t]'\mathbf{u}_{t,m}
	\end{bmatrix}
	\label{eq:eigprobderiv1}
\end{equation}
where $(\cdot)'$ denotes the first order derivative of an expression with respect to the parameter $t$.
Repeating the procedure, following~\cite{Nelson_1976aa}, we obtain the }
$n$-th {order} derivative by solving the linear system
\begin{equation}
	\begin{bmatrix}
		\displaystyle
		\vphantom{\binom{n}{k}}
		\stiff[t]-\lambda_{t,m} \mass[t] & -\mass[t] \mathbf{u}_{t,m}
		\\
		\displaystyle
		\vphantom{\binom{n}{k}}
		\mathbf{u}_{\star}\trans \mass[t]& 0
	\end{bmatrix}
	\begin{bmatrix}
		\displaystyle
		\mathbf{u}_{t,m}^{(n)}
		\vphantom{\binom{n}{k}}
		\\
		\displaystyle
		\lambda_{t,m}^{(n)}
		\vphantom{\binom{n}{k}}
	\end{bmatrix}
	= 
	 \begin{bmatrix}
		\displaystyle
		\mathbf{r}_{t,m}^{n}\\
		\displaystyle
		- \sum_{k=0}^{{n-1}} \binom{n}{k} \mathbf{u}_\star\trans\mass[t]^{(n-k)} \mathbf{u}_{t,m}^{(k)}
	\end{bmatrix} 
	\label{eq:eigprobderiv}
\end{equation}
with the right-hand-side
\begin{equation}
\begin{aligned}
\mathbf{r}_{t,m}^{n} := 
&
- \sum_{k=1}^{{n-1}} \binom{n}{k} \left(\stiff[t]^{(k)} 
- \sum_{j=0}^{k} \binom{k}{j} \lambda_{t,j}^{(j)} \mass[t]^{(k-j)}\right)\mathbf{u}_{t,j}^{(n-k)}  - \stiff[t]^{(n)}\mathbf{u}_{t,j}^{(0)}\\
&
+ \sum_{j=0}^{n-1} \binom{n}{j} \lambda_{t,j}^{(j)} \mass[t]^{(n-j)}\mathbf{u}_{t,j}^{(0)}\;.
\end{aligned}
\label{eq:rhs}
\end{equation}
In the general case, i.e., to obtain a derivative of degree $n_{\max}$, the system~\eqref{eq:eigprobderiv} must be solved repeatedly for~$n = 1, \ldots, n_{\max}$ since \eqref{eq:rhs} includes the derivatives of $\mathbf{u}_{t,m}$ and $\lambda_{t,m}$ up to degree~$n-1$. However, as the system matrix on the left-hand side remains the same for each order~$n$, the implementation can still be made efficient, e.g. when reusing the matrix factorization or preconditioner. 

To compute the corresponding derivatives of the eigenpair, we solve system~\eqref{eq:eigprobderiv}.
However, since the matrix $(\stiff[t] - \lambda_{t,m} \mass[t])$ does not have full rank in case of a multiplicity of an eigenvalue, the system needs special treatment~\cite{Dailey_1989aa}. 
In~\cite{Jorkowski_2020aa}, an algorithm is introduced which treats this rank defect and computes higher derivatives of multiple eigenvalues.

It can be noted from \eqref{eq:eigprobderiv} that the computation of (higher order) derivatives of the eigenvalues and eigenvectors involves (higher order) derivatives of the stiffness and mass matrices $\stiff[t]$ and $\mass[t]$. This will be discussed in Section \ref{sec:diffiga}.
\section{Isogeometric Analysis} \label{sec:iga}
Most CAD tools store the computational geometry by its boundary representation (b-rep). 
The b-rep is then internally parametrized by the union or intersection of several (possibly trimmed) NURBS patches \cite{Piegl_1987aa,Cohen_2001aa}. 
For our following isogeometric finite element analysis, we assume that the geometry is given by a volumetric representation consisting of untrimmed NURBS patches \cite{Massarwi_2019aa}. For example, such a description can be obtained by trivariate CAD kernels like IRIT \cite{elber_2021aa}. Each patch maps from a reference domain $\hat{\Omega}=[0,1]^{\ndim}$ into the (three-dimensional) physical domain $\Omega[t]$. 

We start from the {\dimSpline-dimensional} basis $\{ \hat{B}_{i,p} \}_{i=1}^{\dimSpline}$ of a one-dimensional B-spline space $\mathbb{S}^{p}_{\alpha}$ of degree $p$ and regularity~$\alpha$. The basis is constructed from a knot vector $\boldsymbol{\Xi} = (\xi_1, \xi_2, \dots, \xi_{\dimSpline+p+1})$ with \mbox{$0\leq\xi_1 \leq \xi_2 \leq \dots \leq \xi_{\dimSpline+p+1}\leq1$} using de Boor's algorithm~\cite{de-Boor_2001aa}
\begin{align}
    \hat{B}_{i,0}(\xi) &= \begin{cases}
        1 \quad \mathrm{if} \quad \xi_i \leq \xi < \xi_{i+1}\\
        0 \quad \mathrm{otherwise}
    \end{cases}
\intertext{and for $p>0$}
    \hat{B}_{i,p}(\xi) &= \frac{\xi - \xi_i}{\xi_{i+p} - \xi_i} \hat{B}_{i,p-1}(\xi) + \frac{\xi_{i+p+1} - \xi}{\xi_{i+p+1} - \xi_{i+1}} \hat{B}_{i+1,p-1}(\xi).
\end{align}
Please note the hat over $\hat{B}_{i,p}$. It symbolizes here and in the rest of the paper quantities and functions that are related to the reference domain. 

The construction can be straightforwardly generalized to the tensor product space
\begin{align}
    \mathbb{S}^{p_1,p_2,p_3}_{\alpha_1,\alpha_2,\alpha_3}(\hat{\Omega})
    :=
    S^{p_1}_{\alpha_1}(\hat{\Omega})
    \otimes
    S^{p_2}_{\alpha_2}(\hat{\Omega})
    \otimes
    S^{p_2}_{\alpha_3}(\hat{\Omega}).
\end{align}
Also, from the B-spline basis functions the NURBS curve  
\begin{align}
    \label{eq:NURBS}
\mathbf{S}[t](\xi) = \frac
	{\sum_{i=1}^{\dimSpline} {\hat{B}_{i,p}(\xi)w_i[t] \mathbf{P}_i[t]}}
	{\sum_{i=1}^{\dimSpline} {\hat{B}_{i,p}(\xi)w_i[t]}}
\end{align}
is constructed where both the control points $\mathbf{P}_i[t]$ and $w_i[t]$ may depend on a parameter $t$. 
Again, thanks to the tensor product construction, each volumetric patch is eventually given by a NURBS mapping from the reference space $\hat{\Omega}=[0,1]^3$ to the three-dimensional physical space. Possibly gluing several patches together, we have the (multipatch) mapping 
$$
\map[t]:\hat{\Omega}\to\Omega[t]
$$
for which we assume that it is (piecewise) smoothly invertible. Note, that this abstract parametrization is convenient for shape deformations since the change of the control points $\textbf{P}$ in terms of~$t$ facilitates a (smooth) change in the shape of the computational geometry, in particular for small deformations.
Let us denote the Jacobian of the transformation $\map$ by $\partial \map$ with 
    \begin{align}
        \partial \map_{i,j} = \frac{\partial \map_i}{\partial x_j} 
        \label{eq:jac}
	\end{align}
for $i, j = 1, \dots, \ndim$.
If we consider large deformations, the mappings must fulfill regularity assumptions, e.g., no intersections. We formalize this by requiring that the mapping is valid, i.e., $\det\left(\partial\map\right) > 0$.

We follow \cite{Buffa_2010aa,Buffa_2019ac} to define the compatible discretization spaces for the Laplace and Maxwell eigenvalue problems on the reference domain as
\begin{align}
    \hat{W}(\hat \Omega) 
    &:= \mathbb{S}^{p_1,p_2,p_3}_{\alpha_1,\alpha_2,\alpha_3}(\hat{\Omega})  
    \label{eq_def_What}
    \\
    \hat{\textbf{W}}(\hat{\Omega})
    &:=
    \mathbb{S}^{p_1-1,p_2,p_3}_{\alpha_1-1,\alpha_2,\alpha_3}(\hat{\Omega})
    \times
    \mathbb{S}^{p_1,p_2-1,p_3}_{\alpha_1,\alpha_2-1,\alpha_3}(\hat{\Omega})
    \times
    \mathbb{S}^{p_1,p_2,p_3-1}_{\alpha_1,\alpha_2,\alpha_3-1}(\hat{\Omega})
\intertext{and on a single patch in the physical domain the function spaces are given by}
    W[t](\Omega)
    &:=
    \bigl\{ \fembasis: \fembasis \circ\mathbf{F}[t] =
     \hat{\fembasis},
     \hat{\fembasis}\in\hat{W}(\hat{\Omega}) \label{eq_def_Wt}
    \bigr\}
    \\
    \textbf{W}[t](\Omega) 
    &:=
    \bigl\{
         \mathbf{\fembasis}
     :
     \mathbf{\fembasis}\circ\mathbf{F}[t]
     =
     (\partial \mathbf{F})^{-\top}\hat{\mathbf{\fembasis}},
     \hat{\mathbf{\fembasis}}\in\hat{\textbf{W}}(\hat{\Omega})
    \bigr\}. \label{eq_def_Wtvec}
\end{align}
Let 
$\lbrace\hat{\mathbf{\fembasis}}_j\rbrace_{j=1}^{\Ndof}$ be a (finite) basis for
$\hat{\mathbf{W}}$, then the set
$\lbrace\mathbf{\fembasis}_{j}\rbrace_{j=1}^{\Ndof}$ with $\mathbf{\fembasis}_j = (\partial  \mathbf{F}[t])^{-\top} \hat{\mathbf{\fembasis}}_j \circ \mathbf F[t]^{-1}$ is the corresponding basis for $\mathbf{W}[t]$ which can be used to discretize~\eqref{eq:Maxwell}. {Analogously, a basis $\lbrace \fembasis_j \rbrace_{j=1}^{\Ndof}$ of $W[t]$ is constructed from a basis $\lbrace \hat{\fembasis}_j \rbrace_{j=1}^{\Ndof}$ of $\hat W$ by $\fembasis_j = \hat{\fembasis_j}\circ \mathbf{F}[t]^{-1}$.}
{In the multipatch case, the spaces are glued, where in the Maxwellian case, we only ensure tangential continuity, see \cite{Buffa_2015aa}.} 

\begin{figure}
\centering
    \includegraphics[height=.09\textheight]{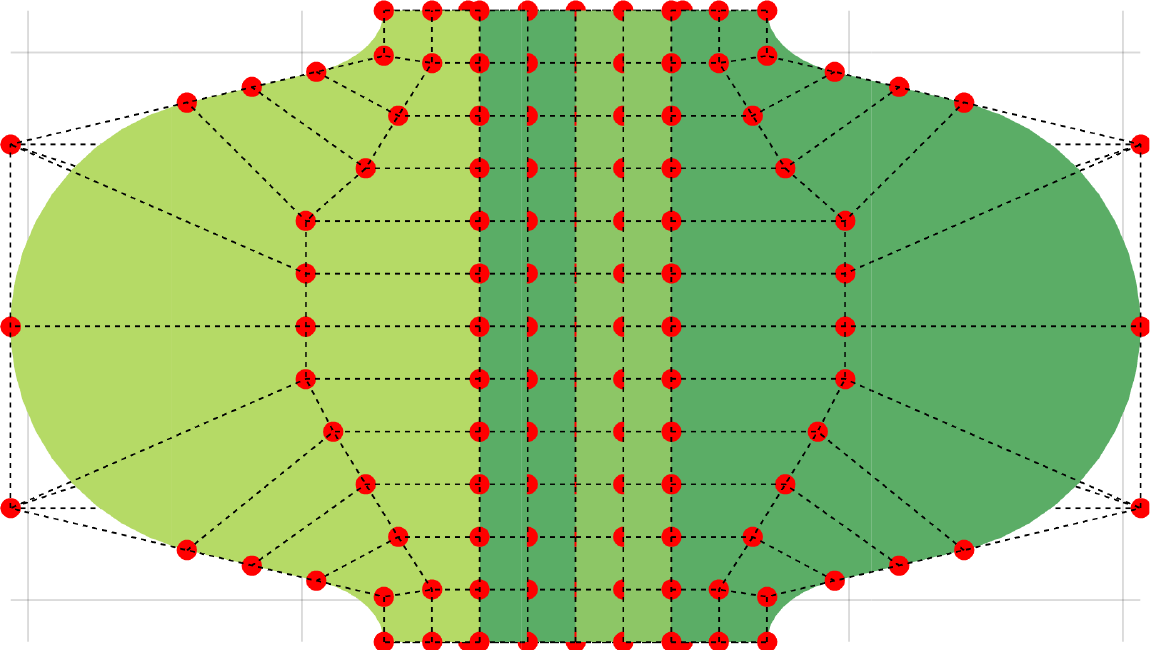}\hspace{1em}
    \includegraphics[height=.09\textheight]{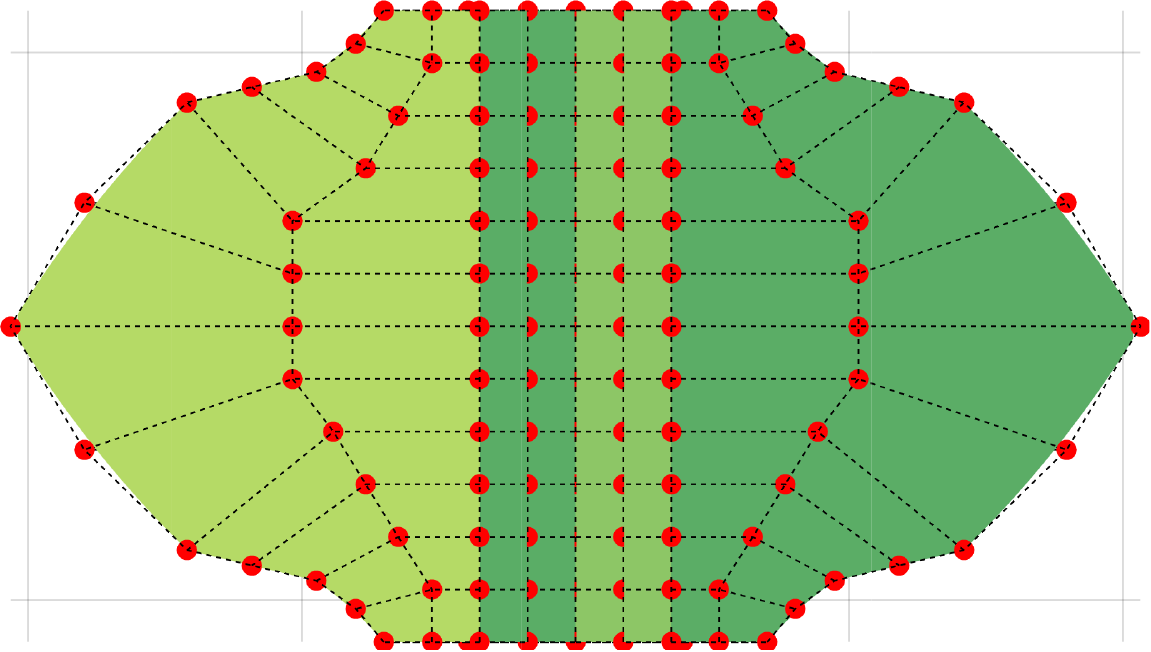}\hspace{1em}
    \includegraphics[trim = 3em 0em 3em 0em, clip, height=.09\textheight]{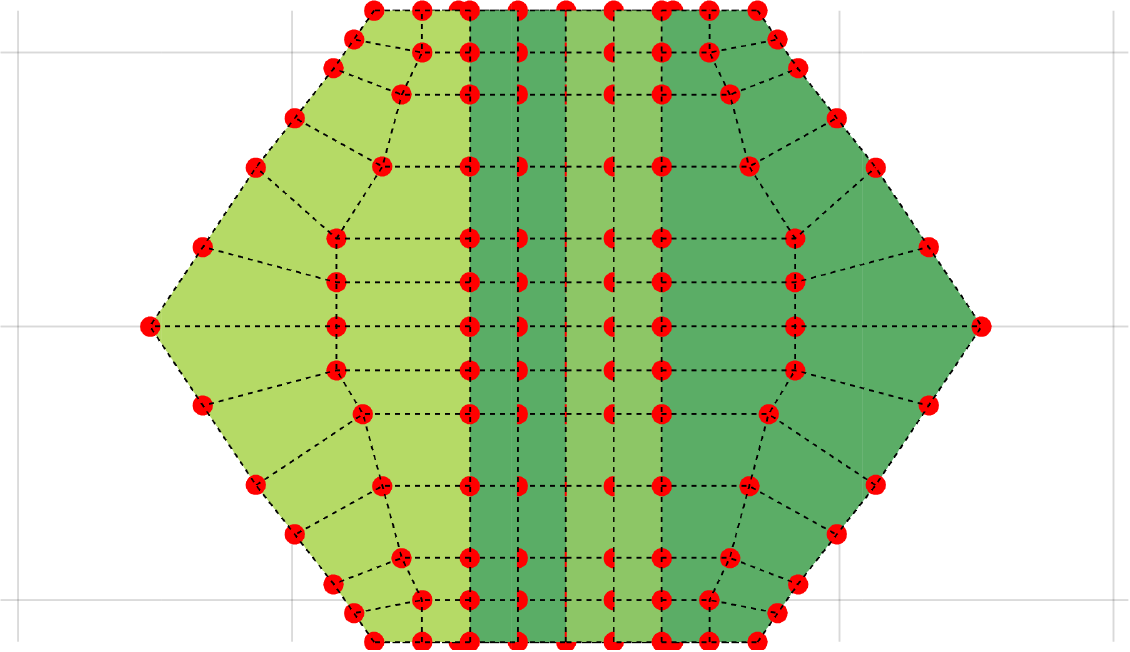}\hspace{1.5em}
    \includegraphics[trim = 8em 0em 8em 0em, clip, height=.09\textheight]{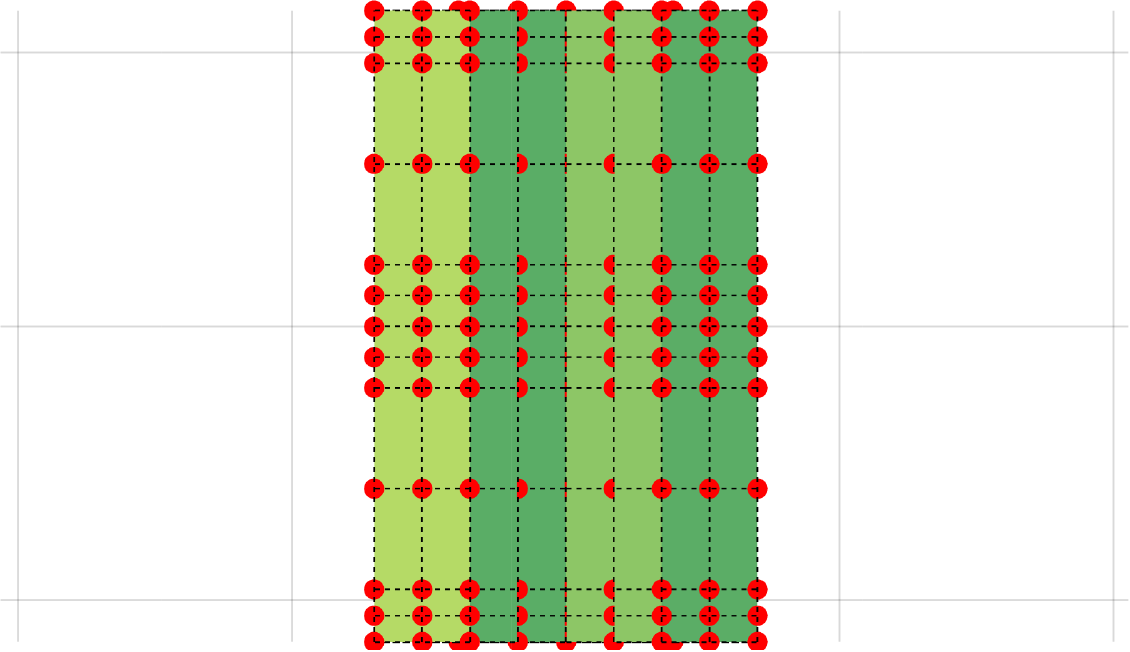}
    \caption{Illustration of a straightforward IGA mesh deformation which transforms the shape of a TESLA cell (left) to a pillbox shape (right). The full cavity cell is composed of 5 patches, see \cite[Section 3.6]{Corno_2017ad} for details on the multipatch approach for TESLA and pillbox cavities. Here, only two patches are shown for visualization purposes. A suitable mesh quality can be observed for all steps during the deformation.}
    \label{fig:IGAmeshDeformation}
\end{figure}

We note that considering large shape deformations with adequate mesh quality is relatively straightforward with IGA. 
{We demonstrate this by morphing a radiofrequency cavity into another shape. 
For this purpose, we consider the superconducting TESLA cavity~\cite{Aune_2000aa}, which is used for particle acceleration and is designed from elliptical shapes.
Its IGA control mesh with the control points marked in red, is illustrated in the cross section view in Fig.~\ref{fig:IGAmeshDeformation} on the left.
We investigate its deformation by convex combination of control points to the cylindrical so-called pillbox cavity, cf. Fig.~\ref{fig:IGAmeshDeformation} on the right, and observe a suitable mesh quality along all steps during the deformation.}

\section{Sensitivities of IGA matrices on parameter-dependent domains}\label{sec:diffiga}
    In this section, given a reference domain $\onref{\Omega}$, two fixed physical domains $\Omega_0$, $\Omega_1$ and a parameter-dependent physical domain $\Omega[\parat]$ continuously depending on a scalar parameter~$\parat$ satisfying $\Omega[0] = \Omega_0$ and $\Omega[1] = \Omega_1$, we compute the sensitivities of the matrices $\stiffHone[t]$ and $\massHone[t]$ defined in \eqref{eq_stiffH1} for the case of $H^1$ and $\stiff[t]$ and $\mass[t]$ defined in \eqref{eq_stiffHcurl} for the case of $\ensuremath{H\left(\mathrm{curl}\right)}$ with respect to the geometry parameter~$\parat$. Let $\map_0$ denote the mapping defining $\Omega_0$ from $\onref{\Omega}$, i.e., $\Omega_0 = \map_0(\onref{\Omega})$, and $\maptilF[t]$ the mapping defining $\Omega[t]$ from $\Omega_0$, i.e., $\Omega[t] = \maptilF[t](\Omega_0)$. By composition, it holds $\Omega[\parat] = \map[t](\onref \Omega)$ with $\map[t]:= \maptilF[t] \circ \map_0$, see also Figure \ref{fig_transformations}. 
	We will make use of the following well-known transformation rules \cite{Monk_2003aa}:	
	\begin{lemma}\label{lem_trafos} Let $\Omega$ a smooth domain, let $\maptil$ a smooth transformation and define $\Omega' := \maptil(\Omega)$. 
	       \begin{enumerate}
	           \item[a)]
	           Let $f \in H^1(\Omega')$. It holds
	           \begin{align} \label{eq_trafoGrad}
	               (\nabla f) \circ \maptil = (\partial \maptil)^{-\top} \nabla (f \circ \maptil).
	           \end{align}
	           \item[b)] Let $\vec{f} \in \ensuremath{H\left(\mathrm{curl};\Omega'\right)}$. It holds
	           \begin{align} \label{eq_trafoCurl}
                    (\mathrm{curl} \, \vec{f}) \circ \maptil = \frac{1}{\det(\partial \maptil)} \, \partial \maptil \, \mathrm{curl} \,(\partial \maptil^\top \vec{f}\circ \maptil).
            	\end{align}
	       \end{enumerate}
	\end{lemma}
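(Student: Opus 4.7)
For part (a), the plan is a direct application of the chain rule. Setting $g := f \circ \maptil$, the chain rule yields componentwise $\nabla g(\onref{x}) = (\partial \maptil(\onref{x}))\trans (\nabla f)(\maptil(\onref{x}))$, i.e.\ $\nabla (f \circ \maptil) = (\partial \maptil)\trans \, (\nabla f)\circ \maptil$. Multiplying by $(\partial \maptil)^{-\top}$, which exists by smoothness and the assumed validity $\det(\partial \maptil)>0$, produces \eqref{eq_trafoGrad}. For $f \in H^1(\Omega')$ rather than $C^1$, the identity extends by a standard density argument.

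For part (b), my preferred route is an integral argument via Stokes' theorem, since a purely pointwise expansion of $\mathrm{curl}(\partial \maptil\trans \vec f \circ \maptil)$ produces many second-derivative terms in $\maptil$ whose cancellation is tedious to verify by hand. First I would derive the identity for smooth $\vec f$ and then extend to $\ensuremath{H\left(\mathrm{curl};\Omega'\right)}$ by density. For any smooth oriented surface $\onref S \subset \onref\Omega$ with image $S := \maptil(\onref S)$, Stokes' theorem gives
$
    \int_S (\curl{\vec f}) \cdot \vec n \, dS = \int_{\partial S} \vec f \cdot \vec\tau \, d\ell.
$
On the right, parametrizing $\partial S$ as $\maptil \circ \onref\gamma$ transforms the tangent via $\partial \maptil$, so that the line integral becomes $\int_{\partial \onref S} (\partial \maptil\trans \, \vec f \circ \maptil)\cdot \hat{\vec\tau}\, d\hat\ell$, which by Stokes on $\onref S$ equals $\int_{\onref S} \mathrm{curl}(\partial \maptil\trans \vec f \circ \maptil) \cdot \hat{\vec n} \, d\hat S$. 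On the left, the Nanson-type surface relation $\vec n\, dS = \det(\partial \maptil)(\partial \maptil)^{-\top}\hat{\vec n}\, d\hat S$ converts the integral into $\int_{\onref S} \det(\partial \maptil)(\partial \maptil)^{-1}(\curl{\vec f})\circ \maptil \cdot \hat{\vec n}\, d\hat S$. Since $\onref S$ (and hence $\hat{\vec n}$) is arbitrary, the integrands must coincide pointwise, and rearranging yields~\eqref{eq_trafoCurl}.

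The main obstacle I anticipate is book-keeping the geometric transformation laws correctly, in particular the Nanson-type relation for oriented surface elements and the sign/orientation conventions connecting $\partial S$ with $\partial \onref S$. These can be justified rigorously either by invoking standard results in the theory of Piola transforms on the de~Rham complex or, equivalently, by viewing $\vec f$ as a differential $1$-form and $\curl{\vec f}\cdot \vec n \, dS$ as its exterior derivative (a $2$-form), so that the identity $\maptil^* \, d = d \, \maptil^*$ automatically produces the factor $\det(\partial \maptil)^{-1} \partial \maptil$ from the pullback of the $2$-form. Having these frameworks available reduces the remaining work to checking that the smooth-field statement extends by density, which follows from continuity of composition with $\maptil$ and multiplication by the (bounded) factors $\partial \maptil$, $(\partial \maptil)^{-1}$ and $\det(\partial \maptil)^{-1}$.
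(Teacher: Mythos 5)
Your proposed proof is correct. Note, however, that the paper itself does not prove Lemma~\ref{lem_trafos} at all: it is stated as a well-known transformation rule with a citation to Monk's book \cite{Monk_2003aa}, so there is no in-paper argument to compare against. Your derivation is the standard textbook one: part (a) is the chain rule plus invertibility of $\partial\maptil$ and a density argument, and part (b) via Stokes' theorem (equivalently, naturality of the exterior derivative under pullback, $\maptil^* d = d\,\maptil^*$) is exactly how the covariant Piola transform identity is usually established. The two points you flag as delicate are indeed the only ones needing care --- the orientation consistency between $\partial S$ and $\partial\onref{S}$ (which requires $\det(\partial\maptil)>0$, i.e.\ the validity assumption the paper imposes on its mappings) and the Nanson relation $\vec n\,dS = \det(\partial\maptil)(\partial\maptil)^{-\top}\hat{\vec n}\,d\hat S$ --- and both are handled correctly. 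The concluding density step is also sound, since the covariant transform $\vec f \mapsto \partial\maptil\trans\,\vec f\circ\maptil$ is bounded from $H(\mathrm{curl};\Omega')$ to $H(\mathrm{curl};\Omega)$ for a smooth, smoothly invertible $\maptil$.
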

	From Lemma \ref{lem_trafos} it follows that  $\vec{f} \in \ensuremath{H\left(\mathrm{curl};\Omega[t]\right)}$ if and only if $\partial \map[t]^\top \vec{f}\circ \map[t] \in \ensuremath{H\left(\mathrm{curl};{\hat \Omega} \right)}$. 
    
     \begin{figure}[htb]
        \centering
        \includegraphics[scale=1]{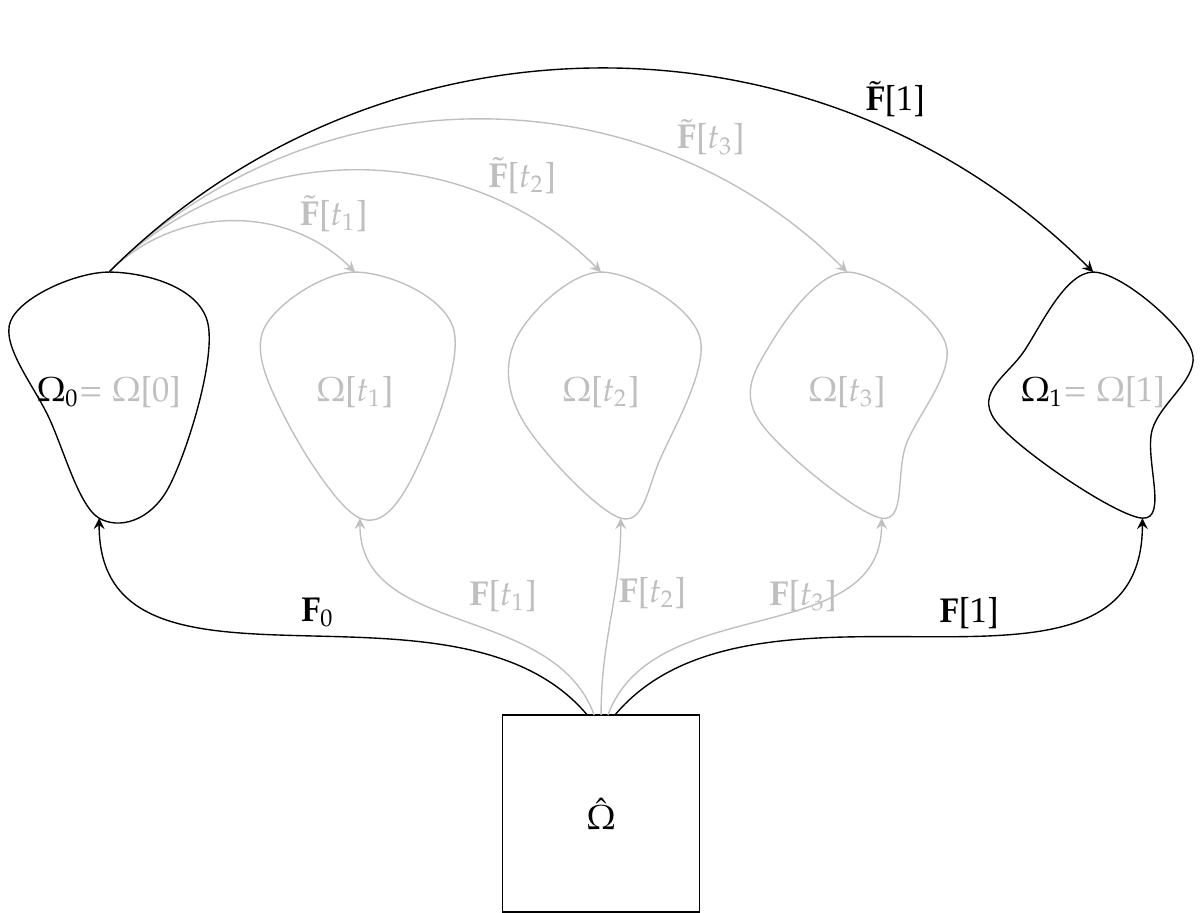}
        \caption{Illustration of shape morphing: A physical domain $\Omega_0$, which is the image of a reference domain $\hat \Omega$ under transformation $\map_0$, is continuously transformed into $\Omega_1 = \maptilF[1](\Omega_0)$. For any $t \in [0,1]$ the intermediate domain is given by $\Omega[t] = \maptilF[t](\Omega_0) = \map[t](\hat \Omega)$ with $\map[t]= \maptilF[t] \circ \map_0$.}
        \label{fig_transformations}
    \end{figure}

    \subsection{Sensitivity of system matrices of transformed domains} \label{sec_abstract_trafo}
    
    We show how sensitivities of stiffness/mass matrices on a parameter-dependent domain $\Omega[t] = \maptil[t](\Omega)$ can be computed on a reference domain $\Omega$, see Figure~\ref{fig_transformations}. This can be applied either for the setting
     \begin{align*}
         \Omega:= \Omega_0, \quad \maptil[t] := \maptilF[t], 
    \qquad \mbox{ or } \qquad
        \Omega:= \hat \Omega, \quad \maptil[t] := \map[t] = \maptilF[t]\circ \map_0,
    \end{align*}
    where the mapping $\maptil[t]$ inherits the positivity assumption on $\map$, i.e., $\det\left(\partial\maptil[t]\right) > 0$.
      
    The following lemmas state that the stiffness and mass matrices for the physical domain $\Omega[t]$ can be written in terms of integrals over the fixed domain~$\Omega$.   
    We begin with the case of $\Hone$.
    \begin{lemma} \label{lem_stiffmat_H1}
            Let $t \geq 0$ fixed and let $\stiffgrad[\parat]$ be the stiffness matrix and $\massgrad[\parat]$ the mass matrix of the Laplacian as defined in \eqref{eq_stiffH1} on the domain $\Omega[t] = \maptil[t](\Omega)$, i.e.,
            \begin{align} \label{eq_Kt_Mt}
                \stiffgrad[t]_{i,j} = \int_{\Omega[t]} \nabla \fembasis_{t,j} \cdot \nabla \fembasis_{t,i} \, \dx,
                \qquad
                \massgrad[t]_{i,j} = \int_{\Omega[t]} \fembasis_{t,j} \, \fembasis_{t,i} \, \dx 
            \end{align}
            for $i, j \in \{1, \dots, \Ndof\}$ with basis functions $\{\fembasis_{t,i}\}_{i=1}^\Ndof$.            
            Let $\fembasis_i := \fembasis_{t,i} \circ \maptil[t]$. Then it holds
            \begin{align}
                \stiffgrad[t]_{i,j} = \int_{\Omega} \mathbf A[t] \nabla \fembasis_j \cdot \nabla \fembasis_i \, \dx, \qquad      \massgrad[t]_{i,j} = \int_{\Omega} \det (\partial \maptil[t]) \fembasis_j \cdot \fembasis_i \, \dx
            \end{align}
            with 
            \begin{align} \label{eq_defAt}
                \mathbf A[t] := \det(\partial \maptil[t]) \partial \maptil[t]^{-1} \partial \maptil[t]^{-\top} .
            \end{align}
    \end{lemma}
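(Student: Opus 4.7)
The plan is to obtain both formulas by performing the change of variables $\mathbf{x} = \maptil[t](\mathbf{y})$ in the integrals defining $\stiffgrad[t]_{i,j}$ and $\massgrad[t]_{i,j}$, and then invoking the gradient transformation rule of Lemma~\ref{lem_trafos}a. The positivity assumption $\det(\partial \maptil[t]) > 0$, inherited from $\map$, guarantees that $\maptil[t]$ is a diffeomorphism from $\Omega$ onto $\Omega[t]$, so the change of variables is legitimate and produces the volume factor $\det(\partial \maptil[t])$ (without absolute value).

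For the mass matrix the calculation is immediate: pulling the integral back to $\Omega$ gives
\begin{align*}
\massgrad[t]_{i,j} = \int_{\Omega[t]} \fembasis_{t,j}\, \fembasis_{t,i}\, \dx = \int_\Omega (\fembasis_{t,j}\circ \maptil[t])(\fembasis_{t,i}\circ \maptil[t])\, \det(\partial \maptil[t])\, \dy,
\end{align*}
and the asserted identity follows upon invoking the definition $\fembasis_i := \fembasis_{t,i}\circ \maptil[t]$.

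For the stiffness matrix, the same change of variables yields an integrand involving $(\nabla \fembasis_{t,j})\circ \maptil[t]$, which I would rewrite using Lemma~\ref{lem_trafos}a as $(\partial \maptil[t])^{-\top}\nabla \fembasis_j$. Expanding the resulting dot product
\begin{align*}
\bigl((\partial \maptil[t])^{-\top} \nabla \fembasis_j\bigr) \cdot \bigl((\partial \maptil[t])^{-\top} \nabla \fembasis_i\bigr) = \nabla \fembasis_i \cdot \left( (\partial \maptil[t])^{-1} (\partial \maptil[t])^{-\top} \right) \nabla \fembasis_j
\end{align*}
and absorbing the volume factor $\det(\partial \maptil[t])$ into this quadratic form produces exactly the coefficient matrix $\mathbf A[t]$ defined in~\eqref{eq_defAt}.

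The whole argument is a routine pullback calculation, and I do not anticipate any genuine obstacle; the only bookkeeping to be careful about is the placement of transposes and inverses, together with the observation that $\mathbf A[t]$ is assembled from two factors $(\partial \maptil[t])^{-\top}$ acting on the reference gradients and the scalar $\det(\partial \maptil[t])$ coming from the change of variables. Both prerequisites needed — smoothness of $\maptil[t]$ and positivity of its Jacobian determinant — are already part of the assumptions stated for the setting of this section.
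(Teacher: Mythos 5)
Your proposal is correct and follows essentially the same route as the paper: a change of variables pulling the integrals back to $\Omega$, followed by the gradient transformation rule of Lemma~\ref{lem_trafos}(a) to assemble the factor $\mathbf A[t]$ for the stiffness matrix, with the mass matrix needing only the Jacobian determinant. The only cosmetic difference is that you write out the mass-matrix computation explicitly, whereas the paper dismisses it as following "straightforwardly by the same coordinate transformation."
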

    \begin{proof} 
    By a change of variables $\mathbf{y} = \maptil[t](\mathbf{x})$ in \eqref{eq_Kt_Mt} and Lemma \ref{lem_trafos}(a) we obtain for the stiffness matrix 
	\begin{align*}
        \stiffgrad[t]_{i,j} =& \int_{\Omega[t]} \nabla \fembasis_{t,j} \cdot  \nabla \fembasis_{t,i} \; \dy \\
        =& \int_{\Omega} \left(\nabla \fembasis_{t,j}\right)\circ \maptil[t] \cdot \left(\nabla \fembasis_{t,i}\right)\circ \maptil[t] \, \det(\partial \maptil[t]) \; \dx \\
        =& \int_{\Omega} \left(  \partial \maptil[t]^{-\top} \nabla \fembasis_j \right) \cdot \left( \partial \maptil[t]^{-\top}  \nabla \fembasis_i \right) \det(\partial \maptil[t]) \, \dx\\
        =& \int_{\Omega} \mathbf A[t] \nabla \fembasis_j \cdot \nabla \fembasis_i \, \dx.
	\end{align*}
	The result for the mass matrix follows straightforwardly by the same coordinate transformation.
    \end{proof}
    Next, we state the corresponding result in the case of $\Hcurl$.
    \begin{lemma} \label{lem_stiffmat_Hcurl}
        Let $t \geq 0$ fixed and let $\stiff[\parat]$ be the stiffness matrix and $\mass[\parat]$ the mass matrix of the electromagnetic problem as defined in \eqref{eq_stiffHcurl} on the domain $\Omega[t] = \maptil[t](\Omega)$, i.e.,
        \begin{align}
            \stiff[t]_{i,j} = \int_{\Omega[t]} \curl \vec{\fembasis}_{t,j} \cdot \curl \vec{\fembasis}_{t,i} \, \dx,
            \qquad
            \mass[t]_{i,j} = \int_{\Omega[t]} \vec{\fembasis}_{t,j} \cdot \vec{\fembasis}_{t,i} \, \dx 
        \end{align}
        for $i, j \in \{1, \dots, \Ndof\}$ with basis functions $\{\vec{\fembasis}_{t,i}\}_{i=1}^\Ndof$.            
        Let $ \vec{\fembasis}_i := \partial \maptil[t]^\top \vec{\fembasis}_{t,i} \circ \maptil[t]$. Then it holds
        \begin{align}
            \stiff[t]_{i,j} = \int_{\Omega} \mathbf C[t] \curl \vec{\fembasis}_j \cdot \curl \vec{\fembasis}_i \, \dx, \qquad      \mass[t]_{i,j} = \int_{\Omega} \mathbf A[t] \vec{\fembasis}_j \cdot \vec{\fembasis}_i \, \dx
        \end{align}
        with $\mathbf A[t]$ as defined in \eqref{eq_defAt} and 
        \begin{align} \label{eq_defCt}
             \mathbf C[t] := \frac{1}{\det(\partial \maptil[t])}\partial \maptil[t]^\top \partial \maptil[t].
        \end{align}
    \end{lemma}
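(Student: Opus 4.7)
The plan is to mirror the structure of the proof of Lemma~\ref{lem_stiffmat_H1}: perform the change of variables $\mathbf{y} = \maptil[t](\mathbf{x})$ in both integrals defining $\stiff[t]_{i,j}$ and $\mass[t]_{i,j}$, pick up the Jacobian determinant factor $\det(\partial \maptil[t])$, and then apply the appropriate transformation rule from Lemma~\ref{lem_trafos} to rewrite the integrands in terms of the pulled-back basis functions $\vec{\fembasis}_i = \partial \maptil[t]^\top \vec{\fembasis}_{t,i} \circ \maptil[t]$.

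For the mass matrix, I would first note that by the definition of $\vec{\fembasis}_i$, we have $\vec{\fembasis}_{t,i} \circ \maptil[t] = \partial \maptil[t]^{-\top} \vec{\fembasis}_i$. After the change of variables, the integrand becomes $(\partial \maptil[t]^{-\top} \vec{\fembasis}_j) \cdot (\partial \maptil[t]^{-\top} \vec{\fembasis}_i) \det(\partial \maptil[t])$, which rearranges to $\det(\partial \maptil[t]) \, \partial \maptil[t]^{-1} \partial \maptil[t]^{-\top} \vec{\fembasis}_j \cdot \vec{\fembasis}_i = \mathbf{A}[t] \vec{\fembasis}_j \cdot \vec{\fembasis}_i$ as required, using the definition \eqref{eq_defAt}.

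For the stiffness matrix, the key step is the curl transformation identity \eqref{eq_trafoCurl} from Lemma~\ref{lem_trafos}(b): since $\partial \maptil[t]^\top \vec{\fembasis}_{t,i} \circ \maptil[t] = \vec{\fembasis}_i$, we have $(\curl \vec{\fembasis}_{t,i}) \circ \maptil[t] = \frac{1}{\det(\partial \maptil[t])} \, \partial \maptil[t] \, \curl \vec{\fembasis}_i$. Substituting this into the transformed integral, the two $\partial \maptil[t]$ factors combine and one factor of $\det(\partial \maptil[t])$ from the change-of-variables cancels one of the two in the denominator, leaving $\frac{1}{\det(\partial \maptil[t])} \partial \maptil[t]^\top \partial \maptil[t] \curl \vec{\fembasis}_j \cdot \curl \vec{\fembasis}_i = \mathbf{C}[t] \curl \vec{\fembasis}_j \cdot \curl \vec{\fembasis}_i$ by the definition \eqref{eq_defCt}.

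There is no real obstacle here beyond careful bookkeeping of transposes, inverses, and determinant factors; the only subtle point worth flagging explicitly is that the assumption $\det(\partial \maptil[t]) > 0$ inherited from $\map$ justifies dropping the absolute value when transforming the volume element, and ensures that $\mathbf{A}[t]$ and $\mathbf{C}[t]$ are well-defined and symmetric positive definite. I would present the stiffness matrix derivation as a displayed chain of equalities analogous to the one in the proof of Lemma~\ref{lem_stiffmat_H1} and remark that the mass matrix computation follows by an identical change of variables together with the definition of $\vec{\fembasis}_i$.
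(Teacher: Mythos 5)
Your proposal is correct and follows essentially the same route as the paper's own proof: a change of variables $\mathbf{y}=\maptil[t](\mathbf{x})$ combined with the pull-back identity $\vec{\fembasis}_{t,i}\circ\maptil[t]=\partial\maptil[t]^{-\top}\vec{\fembasis}_i$ for the mass matrix and the curl transformation rule \eqref{eq_trafoCurl} for the stiffness matrix, after which the factors of $\partial\maptil[t]$ and $\det(\partial\maptil[t])$ assemble into $\mathbf{A}[t]$ and $\mathbf{C}[t]$ exactly as you describe. Your extra remark that $\det(\partial\maptil[t])>0$ justifies omitting the absolute value in the volume element is a sensible clarification that the paper leaves implicit.
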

    \begin{proof}
    A change of variables $\mathbf{y} = \maptil[t](\mathbf{x})$ and Lemma \ref{lem_trafos}(b) yields for the stiffness matrix 
	\begin{align*}
        \stiff[t]_{i,j} =& \int_{\Omega[t]} \curl \vec{\fembasis}_{t,j} \cdot \curl \vec{\fembasis}_{t,i} \dy \\
        =& \int_{\Omega} \left(\curl \vec{\fembasis}_{t,j}\right)\circ \maptil[t] \cdot \left(\curl  \vec{\fembasis}_{t,i}\right)\circ \maptil[t] \, \det(\partial \maptil[t]) \dx \\
        =& \int_{\Omega} \left( \frac{1}{\det(\partial \maptil[t])} \partial \maptil[t] \curl \vec{\fembasis}_j \right) \cdot \left( \frac{1}{\det(\partial \maptil[t])} \partial \maptil[t] \curl \vec{\fembasis}_i \right) \det(\partial \maptil[t]) \, \dx\\
        =& \int_{\Omega}\mathbf C[t] \curl \vec{\fembasis}_j \cdot \curl \vec{\fembasis}_i \, \dx.
	\end{align*}
	For the mass matrix, we obtain with the coordinate transformation $\mathbf{y} = \maptil[t](\mathbf{x})$
	\begin{align*}
        \mass[t]_{i,j} =& \int_{\Omega[t]} \vec{\fembasis}_{t,j} \cdot \vec{\fembasis}_{t,i} \dy \\
        =& \int_{\Omega} \vec{\fembasis}_{t,j}\circ \maptil[t] \cdot \vec{\fembasis}_{t,i}\circ \maptil[t] \, \det(\partial \maptil[t]) \dx \\
        =& \int_{\Omega} \left(\partial \maptil[t]^{-\top} \vec{\fembasis}_j \right) \cdot \left(\partial \maptil[t]^{-\top} \vec{\fembasis}_i \right) \det(\partial \maptil[t]) \, \dx\\
        =& \int_{\Omega} \mathbf A[t]  \vec{\fembasis}_j \cdot  \vec{\fembasis}_i \, \dx.
	\end{align*}
    \end{proof}

    Lemmas \ref{lem_stiffmat_H1} and \ref{lem_stiffmat_Hcurl} allow us to compute the derivatives of the stiffness and mass matrices with respect to the shape parameter $\parat$ by simply differentiating the matrices $\mathbf A[t]$ and $\mathbf C[t]$ defined in \eqref{eq_defAt} and \eqref{eq_defCt}, respectively.
    
    The following result can be found, e.g., in \cite{Sturm_2015ab}.
	\begin{lemma} \label{lem_derivJacobi}
        Let $\maptil[t]$ a smooth transformation with Jacobian $\partial \maptil[t]$, and $\mathbf A[t]$ be as defined in \eqref{eq_defAt}. Then it holds
        \begin{align}
            \ddt \left( \left(\partial \maptil[t]\right)^{-1} \right) =& - (\partial \maptil[t])^{-1} \left(\ddt\partial \maptil[t]\right) (\partial \maptil[t])^{-1}, \label{eq_ddt_partialFinv}\\
            \ddt \det(\partial \maptil[t]) =& \tr\left(\left(\ddt\partial \maptil[t]\right) \partial \maptil[t]^{-1} \right) \det(\partial \maptil[t]), \label{eq_ddt_det}
        \end{align}
        and
        \begin{align}
            \begin{aligned} \label{eq_ddt_A}
            \ddt \mathbf A[t] =&  \tr \left( \left(\ddt\partial \maptil[t] \right) \partial \maptil[t]^{-1} \right) \mathbf A[t] - (\partial \maptil[t])^{-1} \left(\ddt\partial \maptil[t] \right) \mathbf A[t] \\
            &- \left( (\partial \maptil[t])^{-1} \left(\ddt\partial \maptil[t]\right) \mathbf A[t] \right)^\top.
            \end{aligned}
        \end{align}
    \label{lem:trafos}
	\end{lemma}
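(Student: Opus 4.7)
My plan is to establish the three identities in the order they are stated, since each one builds on the previous. The proofs are standard matrix calculus and can be presented compactly.

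For \eqref{eq_ddt_partialFinv}, I would differentiate the identity $\partial\maptil[t]\,\partial\maptil[t]^{-1} = I$ with respect to $t$, obtaining $\bigl(\ddt\partial\maptil[t]\bigr)\partial\maptil[t]^{-1} + \partial\maptil[t]\,\ddt\bigl(\partial\maptil[t]^{-1}\bigr) = 0$, and then multiply from the left by $\partial\maptil[t]^{-1}$ to solve for $\ddt\bigl(\partial\maptil[t]^{-1}\bigr)$. For \eqref{eq_ddt_det}, I would invoke Jacobi's formula in the form $\ddt\det(B(t)) = \det(B(t))\,\tr\bigl(B(t)^{-1}\ddt B(t)\bigr)$ with $B(t)=\partial\maptil[t]$; the trace in the claim uses the cyclic property $\tr(B^{-1}\dot B) = \tr(\dot B\,B^{-1})$, which gives exactly the stated form. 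Since these are well-known identities, I would cite \cite{Sturm_2015ab} and only sketch the derivations.

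The third identity \eqref{eq_ddt_A} is the one that actually requires the previous two. I would write $\mathbf A[t] = \det(\partial\maptil[t])\,\partial\maptil[t]^{-1}\,\partial\maptil[t]^{-\top}$ and apply the product rule to the three factors. The first factor contributes $\tr\bigl((\ddt\partial\maptil[t])\,\partial\maptil[t]^{-1}\bigr)\mathbf A[t]$ by \eqref{eq_ddt_det}. The second factor contributes, via \eqref{eq_ddt_partialFinv}, the term $-\det(\partial\maptil[t])\,\partial\maptil[t]^{-1}(\ddt\partial\maptil[t])\,\partial\maptil[t]^{-1}\partial\maptil[t]^{-\top} = -(\partial\maptil[t])^{-1}(\ddt\partial\maptil[t])\,\mathbf A[t]$, after recognizing that $\det(\partial\maptil[t])\,\partial\maptil[t]^{-1}\partial\maptil[t]^{-\top}=\mathbf A[t]$. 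For the third factor I would use that $\ddt(\partial\maptil[t]^{-\top}) = \bigl(\ddt(\partial\maptil[t]^{-1})\bigr)^\top$, apply \eqref{eq_ddt_partialFinv} inside the transpose, and then observe that the resulting expression $\det(\partial\maptil[t])\,\partial\maptil[t]^{-1}\partial\maptil[t]^{-\top}(\ddt\partial\maptil[t])^\top\partial\maptil[t]^{-\top}$ is exactly the transpose of $(\partial\maptil[t])^{-1}(\ddt\partial\maptil[t])\,\mathbf A[t]$, yielding the third summand.

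The only mildly delicate step is bookkeeping in the last contribution: one must recognize that the transpose of $\mathbf A[t]$ equals $\mathbf A[t]$ itself (it is symmetric by construction) so that, after pulling the transpose through the product, the expression collapses to the clean form shown on the right-hand side of \eqref{eq_ddt_A}. Once this observation is made, the whole computation is a few lines. I would conclude by remarking that an entirely analogous (and in fact simpler) derivation can be carried out for $\mathbf C[t]$ defined in \eqref{eq_defCt}, which will be needed later for the $\Hcurl$ stiffness matrix derivatives.
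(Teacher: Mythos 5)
Your proposal is correct and follows essentially the same route as the paper's own proof: product rule over the three factors of $\mathbf A[t]$, the first two identities feeding the individual terms, and the symmetry $\mathbf A[t]=\mathbf A[t]^\top$ to collapse the last contribution into a transpose. The only cosmetic differences are that you derive \eqref{eq_ddt_partialFinv} by differentiating $\partial\maptil[t]\,\partial\maptil[t]^{-1}=\mathbf{I}$ rather than quoting the derivative of the inversion operator, and you state Jacobi's formula directly in the $\det\cdot\tr$ form rather than via the adjugate.
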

    For sake of completeness, a proof is given in \ref{apdx:proof_dAdt}. Using the same tools, we can also obtain the derivative of the matrix $\mathbf C[t]$.	
	\begin{lemma}
	Let $\maptil[t]$ a smooth transformation with Jacobian $\partial \maptil[t]$, and $\mathbf C[t]$ be as defined in \eqref{eq_defCt}. Then it holds
	    \begin{align} \label{eq_defddtCt}
	        \begin{aligned}
	        \ddt \mathbf C[t] =& - \tr \left(\left(\ddt\partial \maptil[t]\right) \partial \maptil[t]^{-1} \right) \mathbf C[t] + \frac{1}{\det(\partial \maptil[t])} \left(\ddt\partial \maptil[t]\right)^\top \partial \maptil[t]  \\
	        &+ \frac{1}{\det(\partial \maptil[t])}  \left( \left(\ddt\partial \maptil[t]\right)^\top \partial \maptil[t] \right)^\top.
	        \end{aligned}
	   \end{align}
	\end{lemma}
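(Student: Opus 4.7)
The plan is to differentiate $\mathbf{C}[t] = \frac{1}{\det(\partial \maptil[t])} \partial \maptil[t]^\top \partial \maptil[t]$ via the product rule, treating the scalar factor $1/\det(\partial \maptil[t])$ and the matrix factor $\partial \maptil[t]^\top \partial \maptil[t]$ separately. This mirrors the derivation of $\ddt \mathbf{A}[t]$ in \eqref{eq_ddt_A}, and all ingredients are already supplied by Lemma \ref{lem_derivJacobi}, so no new technique is needed; the work is bookkeeping.

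For the scalar factor, I would apply the chain rule to $x \mapsto 1/x$ together with \eqref{eq_ddt_det} to obtain
\begin{align*}
\ddt \left( \frac{1}{\det(\partial \maptil[t])} \right) = -\frac{1}{\det(\partial \maptil[t])} \tr\left( \left(\ddt \partial \maptil[t]\right) \partial \maptil[t]^{-1} \right).
\end{align*}
Multiplying this scalar by $\partial \maptil[t]^\top \partial \maptil[t]$ and recognizing the remaining factor as $\mathbf{C}[t]$ reproduces the first term on the right-hand side of \eqref{eq_defddtCt}, including its negative sign.

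For the matrix factor, I would use that $\partial_t$ commutes with transposition, so
\begin{align*}
\ddt \left( \partial \maptil[t]^\top \partial \maptil[t] \right) = \left(\ddt \partial \maptil[t]\right)^\top \partial \maptil[t] + \partial \maptil[t]^\top \left(\ddt \partial \maptil[t]\right).
\end{align*}
Dividing by $\det(\partial \maptil[t])$, the first summand is exactly the middle term of \eqref{eq_defddtCt}. For the second summand, I would observe that $\partial \maptil[t]^\top (\ddt \partial \maptil[t]) = \big((\ddt \partial \maptil[t])^\top \partial \maptil[t]\big)^\top$, which matches the last term in \eqref{eq_defddtCt}. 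Summing the three contributions gives the claimed identity.

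The computation is elementary and offers no real obstacle; the only point requiring a little care is the sign bookkeeping when differentiating $1/\det(\partial \maptil[t])$ and the identification of the two non-symmetric summands as each other's transposes, so that the result is written in a form manifestly symmetric in the same spirit as \eqref{eq_ddt_A}.
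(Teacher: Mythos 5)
Your proposal is correct and follows essentially the same route as the paper: product rule on the three factors of $\mathbf C[t]$, the identity \eqref{eq_ddt_det} with the chain rule for the scalar factor $1/\det(\partial \maptil[t])$, and recognition of the two remaining summands as each other's transposes. No gaps.
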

	\begin{proof}
	    By means of the chain rule, we have
        \begin{align*}
            \ddt \mathbf C[t] =& \ddt \left( \frac{1}{\det(\partial \maptil[t])}\partial \maptil[t]^\top \partial \maptil[t] \right) \\
            =& \ddt \left( \frac{1}{\det(\partial \maptil[t])} \right) \partial \maptil[t]^\top \partial \maptil[t] + \frac{1}{\det(\partial \maptil[t])} \ddt \left( \partial \maptil[t]^\top \right) \partial \maptil[t]  \\
            &+ \frac{1}{\det(\partial \maptil[t])} \partial \maptil[t]^\top  \ddt \left( \partial \maptil[t] \right)
        \end{align*}
        For the first term we get by \eqref{eq_ddt_det} and the chain rule
        \begin{align*}
            \ddt \left( \frac{1}{\det(\partial \maptil[t])} \right) \partial \maptil[t]^\top \partial \maptil[t] =& - \frac{\ddt\det(\partial \maptil[t])}{\det(\partial \maptil[t])^2} \partial \maptil[t]^\top \partial \maptil[t] \\
            =&- \frac{\det(\partial \maptil[t])}{\det(\partial \maptil[t])^2} \tr \left(\left(\ddt\partial \maptil[t]\right) \partial \maptil[t]^{-1} \right) \partial \maptil[t]^\top \partial \maptil[t] \\
            =&- \tr\left(\left(\ddt\partial \maptil[t]\right) \partial \maptil[t]^{-1} \right) \mathbf C[t],
        \end{align*}
        which yields the result.
	\end{proof}
	
	\begin{corollary} \label{cor_dAC}
	    For a transformation $\maptil[t]$ of the form $\maptil[t](\mathbf{x}) =  \mathbf{x} + t \vec{V}(\mathbf{x})$ with a smooth vector field $\vec{V}$, it holds $\partial \maptil[t](\mathbf{x}) =  \mathbf{I} + t \partial \vec{V}(\mathbf{x})$ and thus
	    \begin{align}
	     \ddt \det(\partial \maptil[t]) =& \tr\left(\partial \vec{V} \partial \maptil[t]^{-1} \right) \det(\partial \maptil[t]), \label{eq_dDet}\\
	        \ddt \mathbf A[t] =&  \tr(\partial \vec{V} \partial \maptil[t]^{-1} ) \mathbf A[t] - (\partial \maptil[t])^{-1} \partial \vec{V} \, \mathbf A[t]  - ( (\partial \maptil[t])^{-1} \partial \vec{V} \, \mathbf A[t])^\top, \label{eq_dA}\\
	        \ddt \mathbf C[t] =& - \tr(\partial \vec{V} \partial \maptil[t]^{-1} ) \mathbf C[t] + \frac{1}{\det(\partial \maptil[t])} \partial \vec{V}^\top \partial \maptil[t]  + \frac{1}{\det(\partial \maptil[t])}  (\partial \vec{V}^\top \partial \maptil[t] )^\top.  \label{eq_dC}
	    \end{align}
	\end{corollary}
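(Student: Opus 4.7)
The proof is essentially a direct specialization of Lemma~\ref{lem:trafos} and the preceding lemma on $\mathbf{C}[t]$ to the particular affine-in-$t$ transformation $\maptil[t](\mathbf{x}) = \mathbf{x} + t\vec{V}(\mathbf{x})$, so the plan is simply to compute $\ddt\partial\maptil[t]$ explicitly and substitute.

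First I would verify the Jacobian formula. Since differentiation in $\mathbf{x}$ is linear and commutes with the $t$-multiplication, $\partial \maptil[t](\mathbf{x}) = \partial(\mathbf{x}) + t\,\partial\vec{V}(\mathbf{x}) = \mathbf{I} + t\,\partial\vec{V}(\mathbf{x})$, which gives the stated first identity. Differentiating this expression in $t$ yields the single key observation driving everything else,
\begin{equation*}
    \ddt \partial \maptil[t] = \partial \vec{V},
\end{equation*}
independent of $t$.

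Next I would substitute this identity into the three general formulas already established. Plugging $\ddt\partial\maptil[t] = \partial\vec{V}$ into \eqref{eq_ddt_det} of Lemma~\ref{lem:trafos} gives \eqref{eq_dDet} at once; plugging it into \eqref{eq_ddt_A} gives \eqref{eq_dA}; and plugging it into \eqref{eq_defddtCt} gives \eqref{eq_dC}. No further calculation is required because Lemmas on $\mathbf{A}[t]$ and $\mathbf{C}[t]$ have already absorbed all of the chain-rule work and the derivative-of-determinant identity.

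There is really no hard step here: the only thing to be careful about is noting that the formulas in Lemma~\ref{lem:trafos} and in the $\mathbf{C}[t]$ lemma are stated in terms of the abstract quantity $\ddt\partial\maptil[t]$ and in terms of $\partial\maptil[t]$ itself (which still depends on $t$ here), and that after substitution neither $\partial\maptil[t]^{-1}$ nor $\det(\partial\maptil[t])$ simplifies further in closed form for general $\vec{V}$. Thus the stated right-hand sides are the most explicit form one can write without further structural assumptions on $\vec{V}$, and the corollary follows directly.
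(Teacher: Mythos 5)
Your proposal is correct and matches the paper's intent exactly: the corollary is stated without a separate proof precisely because it follows by substituting $\ddt\partial\maptil[t]=\partial\vec{V}$ (a consequence of $\partial\maptil[t]=\mathbf{I}+t\,\partial\vec{V}$) into \eqref{eq_ddt_det}, \eqref{eq_ddt_A}, and \eqref{eq_defddtCt}. Nothing is missing.
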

	
	\subsection{Application to shape morphing} \label{sec_morph}
	We apply the results of Section \ref{sec_abstract_trafo} to the setting illustrated in Figure \ref{fig_transformations} so as to derive formulas for the sensitivities of the mass and stiffness matrices in the case of the Laplacian and electromagnetics on the fixed physical domain $\Omega_0$ (i.e., we set $\Omega:= \Omega_0$ and $\maptil[t]:=\maptilF[t]$) and on the reference domain $\hat \Omega$ (i.e., $\Omega:= \hat \Omega$ and $\maptil[t]:=\map[t] = \maptilF[t]\circ \map_0$).
	Recall the bases $\{\hat{\fembasis}_i\}_{i = 1}^{\Ndof}$ and $\{\vec{\hat{\fembasis}}_i\}_{i = 1}^{\Ndof}$ of the discrete scalar and vector-valued spaces $\hat{W}(\hat{\Omega})$ and $\hat{\textbf{W}}(\hat{\Omega})$, respectively. Moreover, recall that it holds for the fixed physical domain $\Omega_0 = \map_0(\hat \Omega)$. Thus, it holds that the functions $\fembasis_{0,i} :=  \hat \fembasis_i \circ \map_0^{-1}$ and $\vec{\fembasis}_{0,i} := (\partial  \map_0)^{-\top} ( \vec{\hat \fembasis}_i \circ \map_0^{-1})$ form bases of $W[0](\Omega_0)$ and $\textbf{W}[0](\Omega_0)$, respectively (see \eqref{eq_def_Wt} and \eqref{eq_def_Wtvec}).
	
	Given two sets of control points $\{\mathbf{P}_{0,i}\}_{i = 1}^{\Ngeo}$ and $\{\mathbf{P}_{1,i}\}_{i = 1}^{\Ngeo}$ corresponding to an original domain $\Omega_0$ and a target domain $\Omega_1$, respectively, {represented using a spline basis of dimension \Ngeo}, we are interested in the stiffness and mass matrices for intermediate domains represented by a scalar parameter $t \in [0,1]$ as well as their derivatives with respect to $t$. 
    {Note, that it is possible but not necessary to choose the spaces such that the dimensions for the solution space $\Ndof$ and the geometry space $\Ngeo$ are the same.}
	
	\paragraph{Calculations on fixed physical domain $\Omega_0$}
	The transformation $\maptilF[t]$ defined by
	\begin{align}
	    \maptilF[t](\mathbf{x}) = \sum_{i =1}^{\Ngeo} ( \mathbf{P}_{0,i} + t(\mathbf{P}_{1,i} - \mathbf{P}_{0,i})) \fembasis_{0,i}(\mathbf{x}), \; \mathbf{x} \in \Omega_0
	\end{align}
	for $t \in [0,1]$ represents a smooth transition from $\Omega_0 = \maptilF[0](\Omega_0)$ to $\Omega_1 = \maptilF[1](\Omega_0)$, see also Figure \ref{fig_transformations}. Moreover, note that the transformation $\maptilF[t]$ is of the form $\maptilF[t](\mathbf{x}) = \mathbf{x} + t \vec{V}(\mathbf{x})$ with
	\begin{align*}
	    \vec{V}(\mathbf{x}) := \sum_{i =1}^{\Ngeo} (\mathbf{P}_{1,i} - \mathbf{P}_{0,i}) \fembasis_{0,i}(\mathbf{x}), \; \mathbf{x} \in \Omega_0.
	\end{align*}

    We define $\Omega[t] := \maptilF[t](\Omega_0)$. Then we obtain for the stiffness and mass matrices on $\Omega[t]$
	\begin{align}
        \ddt \stiffgrad[t]_{i,j} =&  \int_{\Omega_0} \ddt \mathbf A[t]  \nabla \fembasis_{0,j} \cdot \nabla \fembasis_{0,i} \, \dx, \\
        \ddt \massgrad[t]_{i,j} =&  \int_{\Omega_0} \ddt \det(\partial \maptilF[t]) \fembasis_{0,j} \, \fembasis_{0,i} \, \dx,
	\end{align}
	in the case of the Laplacian, and
	\begin{align}
        \ddt \stiff[t]_{i,j} =&  \int_{\Omega_0} \ddt \mathbf C[t]  \curl \vec{\fembasis}_{0,j} \cdot \curl \vec{\fembasis}_{0,i} \, \dx,\\
        \ddt \mass[t]_{i,j} =&  \int_{\Omega_0} \ddt \mathbf A[t] \vec{\fembasis}_{0,j} \cdot \vec{\fembasis}_{0,i} \, \dx
	\end{align}
	in the case of electromagnetism.
	Here, the matrices $\mathbf A[t]$ and $\mathbf C[t]$ involve the transformation $\maptil[t] = \maptilF[t]$ and their derivatives can be evaluated by means of Corollary \ref{cor_dAC} using
	\begin{align*}
	    \partial \vec{V}(\mathbf{x}) &= \sum_{i =1}^{\Ngeo} (\mathbf{P}_{1,i} - \mathbf{P}_{0,i}) (\nabla \fembasis_{0,i}(\mathbf{x}))^\top \; \in \R^{\ndim \times \ndim}.
	\end{align*}
	
    In the following section, we emphasize the spatial variable of the domain with respect to which the differentiation is carried out by adding it as an index to the $\partial$-operator.
	
	\paragraph{Calculations on reference domain $\hat \Omega$}
	The derivatives of stiffness and mass matrices can as well be computed on the reference domain $\hat \Omega$, e.g., for electromagnetics as
	\begin{align*}
        \ddt \stiff[t]_{i,j} =&  \int_{\hat \Omega} \ddt \mathbf{ \hat C}[t]  \curl \vec{\hat \fembasis}_j \cdot \curl \vec{\hat \fembasis}_i \, \dx,\\
        \ddt \mass[t]_{i,j} =&  \int_{\hat \Omega} \ddt \mathbf{\hat A}[t]  \vec{\hat \fembasis}_j \cdot  \vec{\hat \fembasis}_i \, \dx
	\end{align*}
	where $\mathbf{ \hat C}[t]$, $\mathbf{ \hat A}[t]$ are as defined in \eqref{eq_defCt} and \eqref{eq_defAt}, respectively, using the transformation $\maptil[t] = \map[t]$. Note that $\map[t](\mathbf{\hat x}) = (\maptilF[t] \circ \map_0)(\mathbf{\hat x}) = \map_0(\mathbf{\hat x}) + t \vec{V}(\map_0(\mathbf{\hat x}))$ and thus
	\begin{align}
	    \partial_{\mathbf{\hat x}} \map[t](\mathbf{\hat x}) 
	    =& \partial_{\mathbf{\hat x}} \map_0(\mathbf{\hat x}) + t \partial_{\mathbf{x}} \vec{V}(\map_0(\mathbf{\hat x})) \, \partial_{\mathbf{\hat x}} \map_0(\mathbf{\hat x}) 
	\end{align}
	where, using $\fembasis_{0,i}\circ \map_0 = \hat{\fembasis}_i $,
	\begin{align*}
	    \partial_{\mathbf{x}} \vec{V}(\map_0(\mathbf{\hat x}) ) =& \partial_{\mathbf{x}} \left( \sum_{i =1}^{\Ngeo} (\mathbf{P}_{1,i} - \mathbf{P}_{0,i}) \fembasis_{0,i}( \map_0(\mathbf{\hat x})) \right)\\
	    =& \partial_{\mathbf{x}} \left( \sum_{i = 1}^{\Ngeo} (\mathbf{P}_{1,i} - \mathbf{P}_{0,i}) \hat{\fembasis}_i \circ \map_0^{-1}(\map_0(\mathbf{\hat x})) \right) \\
	    =&\sum_{i = 1}^{\Ngeo} (\mathbf{P}_{1,i} - \mathbf{P}_{0,i}) \partial_{\mathbf{\hat x}} \hat{\fembasis}_i(\mathbf{\hat x}) \, \partial_{\mathbf{x}}(\map_0^{-1})(\map_0(\mathbf{\hat x})) \\
	    =&\sum_{i = 1}^{\Ngeo} (\mathbf{P}_{1,i} - \mathbf{P}_{0,i}) \partial_{\mathbf{\hat x}} \hat{\fembasis}_i(\mathbf{\hat x}) \, (\partial_{\mathbf{\hat x}} \map_0(\mathbf{\hat x}) )^{-1}.
	\end{align*}
	Thus, we have
	\begin{align*}
	    \partial_{\mathbf{\hat x}} \map[t](\mathbf{\hat x}) =& \partial_{\mathbf{\hat x}} \map_0(\mathbf{\hat x}) + t \sum_{i = 1}^{\Ngeo} (\mathbf{P}_{1,i} - \mathbf{P}_{0,i}) \partial_{\mathbf{\hat x}} \hat{\fembasis}_i(\mathbf{\hat x}) \, (\partial_{\mathbf{\hat x}} \map_0(\mathbf{\hat x}) )^{-1} \partial_{\mathbf{\hat x}} \map_0(\mathbf{\hat x}) \\
	     =& \partial_{\mathbf{\hat x}} \map_0(\mathbf{\hat x}) + t \sum_{i = 1}^{\Ngeo} (\mathbf{P}_{1,i} - \mathbf{P}_{0,i}) \partial_{\mathbf{\hat x}} \hat{\fembasis}_i(\mathbf{\hat x}) 
	\end{align*}
\newcommand{\G}{\mathbf{G}}
\renewcommand{\P}{\mathbf{P}}
\newcommand{\x}{\mathbf{x}}
\newcommand{\hatx}{\mathbf{\hat{x}}}
\newcommand{\dG}{\partial \G[t]}
\newcommand{\dGinv}{\partial \G[t]^{-1}}
\newcommand{\dV}{\partial \mathbf{V}}

\newcommand{\funi}{\frac{1}{\det(\dG)}}
\newcommand{\fid}{- \frac{1}{\det(\dG)} \tr(\dV \dG^{-1} )}
\newcommand{\fidd}{\frac{1}{\det(\dG)} \left[ \tr(\dV \dG^{-1} ) ^2 + \tr\left[(\dV \dGinv) (\dV \dGinv)\right]\right]}
\newcommand{\fii}{\dG^\top}
\newcommand{\fiid}{\dV^\top}
\newcommand{\fiii}{\dG}
\newcommand{\fiiid}{\dV}
\newcommand{\f}[2]{f_{#1}^{(#2)}}

\subsection{Higher order derivatives}
For computing higher order derivatives of the system matrices in the case of electromagnetism exemplarily, we accordingly differentiate the terms $\mathbf C[t]$ and $\mathbf A[t]$.
{Let us look at further derivatives of the term $\ddt \mathbf C[t]$, which work analogously for~$\ddt \mathbf A[t]$.}

\begin{lemma} \label{lem:d2dt2C}
        Let $\maptil[t]{(\x)=\x+t\mathbf{V}(\x)}$ a smooth transformation {for a given vector field $\mathbf{V}$} with Jacobian $\partial \maptil[t]$, and $\mathbf C[t]$ as defined in \eqref{eq_defCt} and $\ddt \mathbf C[t]$ as defined in \eqref{eq_defddtCt}. Then it holds
        
        \begin{align*}
            \ddt \left[ \frac{1}{\det(\dG)} \right] &= - \frac{1}{\det(\dG)} \tr(\dV \dG^{-1} ),\\
            \frac{\mathrm{d}^2}{\mathrm{d}t^2} \left[ \frac{1}{\det(\dG)} \right] &= \frac{1}{\det(\dG)} \left[ \tr(\dV \dG^{-1} ){^2} + \tr\left[(\dV \dGinv) (\dV \dGinv)\right]\right],
        \end{align*}
        and
        \begin{align*}
            \frac{\mathrm{d}^2}{\mathrm{d}t^2} \mathbf C[t] &= \tr\left[(\dV \dGinv) (\dV \dGinv)\right] \mathbf C[t] -  \tr(\dV \dGinv ) \ddt \mathbf C[t]\\
            & - \frac{1}{\det(\dG)} \tr(\dV \dG^{-1} ) \cdot \left[ \dV^\top \dG + \dG^\top  \dV \right]\\
            & {+ \frac{2}{\det(\dG)} \cdot  \dV ^\top \dV }.
        \end{align*}
\end{lemma}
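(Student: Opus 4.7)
The plan is to obtain all three identities by direct differentiation, exploiting the simplifying fact that for the affine transformation $\maptil[t](\x) = \x + t\vec{V}(\x)$ the Jacobian $\dG = \mathbf{I} + t \dV$ is linear in $t$, so $\ddt \dG = \dV$ is constant in $t$ and all second derivatives of $\dG$ itself vanish. The only external ingredients needed are equations \eqref{eq_ddt_partialFinv} and \eqref{eq_ddt_det} from Lemma \ref{lem_derivJacobi}.

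For the first identity, I apply the chain rule $\ddt(1/\det(\dG)) = -\ddt\det(\dG)/\det(\dG)^2$ and substitute \eqref{eq_ddt_det} with $\ddt \dG = \dV$, which yields the stated trace expression immediately. For the second identity, I differentiate the first. The key subidentity is
\[
\ddt \tr(\dV \dGinv) = \tr\bigl(\dV \, \ddt \dGinv\bigr) = -\tr\bigl((\dV \dGinv)(\dV \dGinv)\bigr),
\]
where the first equality uses $\ddt \dV = 0$ (since $\dV$ does not depend on $t$) and the second uses \eqref{eq_ddt_partialFinv} together with $\ddt \dG = \dV$, followed by the cyclic property of the trace. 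Combining this with the first identity via the product rule and collecting the two positive contributions yields the stated closed form.

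For the second derivative of $\mathbf{C}[t]$, I differentiate \eqref{eq_defddtCt} term by term. The first term $-\tr(\dV \dGinv)\mathbf{C}[t]$ produces $\tr\bigl[(\dV \dGinv)(\dV \dGinv)\bigr]\mathbf{C}[t] - \tr(\dV \dGinv)\,\ddt \mathbf{C}[t]$ by the subidentity above together with the very definition of $\ddt \mathbf{C}[t]$. The remaining two terms $\tfrac{1}{\det(\dG)}\dV^\top \dG$ and $\tfrac{1}{\det(\dG)}\dG^\top \dV$ are handled by the product rule together with $\ddt(\dV^\top \dG) = \dV^\top \dV$ and $\ddt(\dG^\top \dV) = \dV^\top \dV$, both of which follow from $\ddt \dV = 0$ and $\ddt \dG = \dV$. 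The two resulting $\dV^\top \dV$ contributions combine into the $2 \dV^\top \dV/\det(\dG)$ term in the statement, while the two $-\tr(\dV \dGinv)/\det(\dG)$ prefactors produce the symmetric combination $\dV^\top \dG + \dG^\top \dV$.

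The main obstacle is purely bookkeeping: keeping the transposes consistent and ensuring that the two symmetric contributions combine correctly into the stated form. No new conceptual ingredient beyond Lemma \ref{lem_derivJacobi} is required, because the linearity of $\maptil[t]$ in $t$ eliminates any higher derivatives of $\dG$ and $\dV$ and reduces every computation to an application of the product and chain rules.
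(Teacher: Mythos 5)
Your proposal is correct and follows essentially the same route as the paper's proof in the appendix: both derive the key subidentity $\ddt \tr(\partial \mathbf{V}\, \partial \mathbf{G}[t]^{-1}) = -\tr\bigl[(\partial \mathbf{V}\, \partial \mathbf{G}[t]^{-1})(\partial \mathbf{V}\, \partial \mathbf{G}[t]^{-1})\bigr]$ from \eqref{eq_ddt_partialFinv} and cyclicity of the trace, and then differentiate \eqref{eq_defddtCt} term by term with the product rule, exploiting that $\ddt \partial \mathbf{G}[t] = \partial \mathbf{V}$ is constant. All intermediate identities and the final bookkeeping check out.
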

The proof for this result can be found in \ref{apdx:proof_d2Cdt2}.
This allows the computation of the second derivative of the stiffness matrix via Lemma~\ref{lem_stiffmat_Hcurl}. 
For the computation of the third derivative, we use the following Lemma.
\begin{lemma}\label{lem:d3dt3C}
        Let $\maptil[t]{(\x)=\x+t\mathbf{V}(\x)}$ a smooth transformation {for a given vector field $\mathbf V$} with Jacobian $\partial \maptil[t]$, and $\mathbf C[t]$ be as defined in \eqref{eq_defCt}. Then it holds
        \begin{align*}
            \frac{\mathrm{d}^3}{\mathrm{d}t^3} \mathbf C[t] =& \biggl(-\tr(\dV\dGinv)\cdot \fidd   \\
            &-2 \cdot \frac{1}{\det(\dG)} \cdot \biggl[ \tr(\dV \dGinv)\cdot \tr(\dV \dGinv \dV \dGinv) \\
            &+ \tr (\dV \dGinv \dV \dGinv\dV \dGinv) \biggr]\biggr) \fii  \fiii \\
            &- 6 \cdot \frac{1}{\det(\dG)} \tr(\dV \dG^{-1} ) \fiid \fiiid \\
            &+ 3 \cdot \fidd \fiid \fiii\\
            &+ 3 \cdot \fidd \fii \fiiid.
        \end{align*}
\end{lemma}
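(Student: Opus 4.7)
The plan is to apply the multivariate Leibniz rule to the three-factor product $\mathbf{C}[t] = \tfrac{1}{\det(\partial \maptil[t])}\, (\partial \maptil[t])\trans \, (\partial \maptil[t])$. The essential simplification is that, because $\maptil[t](\mathbf{x}) = \mathbf{x} + t\, \mathbf{V}(\mathbf{x})$, the Jacobian $\partial \maptil[t] = \mathbf{I} + t\, \partial \mathbf{V}$ is affine in $t$, so its first derivative is the constant $\partial \mathbf{V}$ and all higher derivatives vanish identically.

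Consequently, in the trinomial expansion
\begin{equation*}
\frac{\mathrm{d}^3}{\mathrm{d}t^3}(f g h) \;=\; \sum_{i+j+k=3} \binom{3}{i,j,k}\, f^{(i)} g^{(j)} h^{(k)}
\end{equation*}
with $f = 1/\det(\partial \maptil[t])$, $g = (\partial \maptil[t])\trans$, and $h = \partial \maptil[t]$, only the four terms with $j,k \in \{0,1\}$ survive. Their multinomial coefficients $\binom{3}{3,0,0}=1$, $\binom{3}{2,1,0}=\binom{3}{2,0,1}=3$, and $\binom{3}{1,1,1}=6$ reproduce exactly the summands claimed in the statement: $f^{(3)}\,(\partial \maptil[t])\trans (\partial \maptil[t])$, the two cross terms carrying $3 f^{(2)}$, and the pure $(\partial \mathbf{V})\trans (\partial \mathbf{V})$ term carrying $6 f^{(1)}$. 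The prefactor $-6\, \tr(\partial \mathbf{V}\, \partial \maptil[t]^{-1})/\det(\partial \maptil[t])$ visible in the second line of Lemma~\ref{lem:d3dt3C} then follows immediately from $f^{(1)} = -\tr(\partial \mathbf{V}\, \partial \maptil[t]^{-1})/\det(\partial \maptil[t])$, which is already supplied by Lemma~\ref{lem:d2dt2C}.

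What remains is to verify that $f^{(3)}$ coincides with the bracketed coefficient of $(\partial \maptil[t])\trans (\partial \maptil[t])$ in the statement. Starting from the expression for $f^{(2)}$ given in Lemma~\ref{lem:d2dt2C}, I would differentiate once more using the product rule. Two ingredients are enough: the identity $\tfrac{\mathrm{d}}{\mathrm{d}t}(\partial \maptil[t])^{-1} = -(\partial \maptil[t])^{-1}(\partial \mathbf{V})(\partial \maptil[t])^{-1}$ from Lemma~\ref{lem_derivJacobi}, and the resulting trace relation $\tfrac{\mathrm{d}}{\mathrm{d}t}\tr\bigl((\partial \mathbf{V}\, \partial \maptil[t]^{-1})^k\bigr) = -k\, \tr\bigl((\partial \mathbf{V}\, \partial \maptil[t]^{-1})^{k+1}\bigr)$ obtained from cyclicity of the trace. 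Applying this with $k=1$ and $k=2$ to the two summands of $f^{(2)}$ produces a contribution of the form $-\tr(\partial \mathbf{V}\, \partial \maptil[t]^{-1})\cdot f^{(2)}$ together with precisely the two additional trace products appearing inside the brackets of the lemma.

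The work is almost entirely bookkeeping, and the main obstacle will be purely symbolic: correctly tracking the multinomial coefficients from the trinomial Leibniz rule and recognising that the term $-\tr(\partial \mathbf{V}\, \partial \maptil[t]^{-1})\cdot f^{(2)}$ arises naturally in $f^{(3)}$ rather than having to be inserted by hand. No new analytic tools beyond Lemmas~\ref{lem_derivJacobi} and~\ref{lem:d2dt2C} are required.
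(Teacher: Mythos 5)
Your proposal is correct and follows essentially the same route as the paper: the appendix proof also writes $\mathbf C[t]=f_1 f_2 f_3$ with $f_1=1/\det(\partial\maptil[t])$, $f_2=\partial\maptil[t]^\top$, $f_3=\partial\maptil[t]$, applies the general Leibniz rule, discards the tuples killed by $f_2^{(2)}=f_3^{(2)}=0$ (since the Jacobian is affine in $t$), and assembles the same four terms with coefficients $1,3,3,6$. Your derivation of $f_1^{(3)}$ by differentiating $f_1^{(2)}$ with the trace identity $\frac{\mathrm{d}}{\mathrm{d}t}\tr\bigl((\partial\vec{V}\,\partial\maptil[t]^{-1})^k\bigr)=-k\,\tr\bigl((\partial\vec{V}\,\partial\maptil[t]^{-1})^{k+1}\bigr)$ reproduces exactly the bracketed coefficient in the lemma, so no gap remains.
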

A proof, based on the General Leibniz rule, is presented in \ref{apdx:proof_dnCdtn}.

In order to compute also higher order derivatives of $\mathbf C[t]$ and analogously for $\mathbf A[t]$, we use symbolic differentiation for~$\mathbf C[t]$ and~$\mathbf A[t]$ using the Symbolic Math Toolbox~\cite{mathworks_2022aa} in MATLAB\textsuperscript{\textregistered} which only requires a few lines of code.
In essence, we state the Jacobian $\dG(\mathbf{x}) = \mathbf{I} + t \cdot \dV(\mathbf{x})$ and the corresponding term~\eqref{eq_defCt} for $\mathbf C[t]$ or~\eqref{eq_defAt} for $\mathbf A[t]$.
In a loop, we can then iteratively differentiate $\mathbf C[t]$ and $\mathbf A[t]$ up to the desired order.
This allows us to perform analytical differentiation easily up to high orders and also the computation of the system matrix derivatives in closed-form \cite{Ziegler_2022ac}.

\section{Numerical Examples}\label{sec:numerics}
In the following subsections, we present two applications in which we make use of our results.

\subsection{Uncertainty quantification for a pillbox cavity of uncertain radius}
\begin{figure}
\begin{subfigure}[t]{0.45\textwidth}
\includegraphics[scale=1]{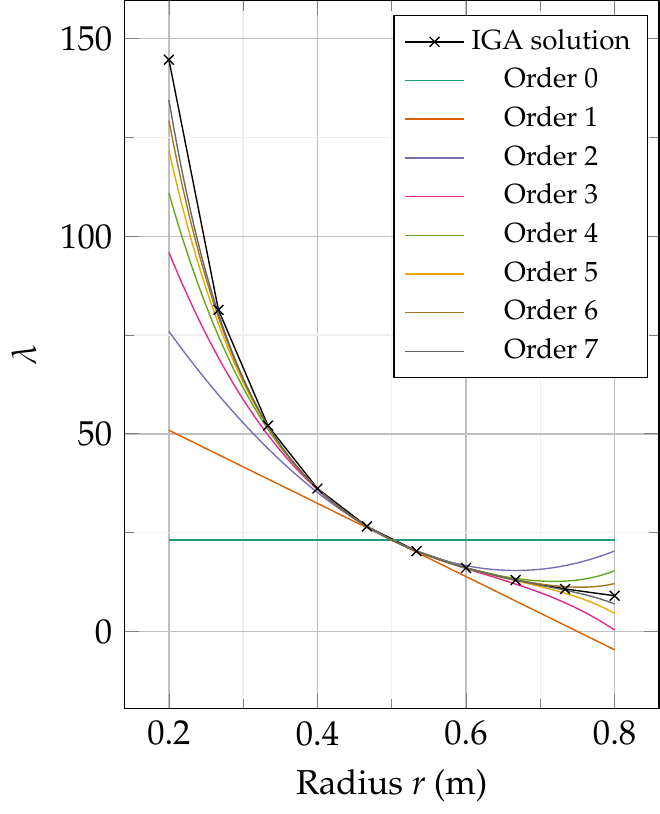}
\caption{Eigenfrequencies of the pillbox cavity estimated by the Taylor series expansion at $r_d=\SI{0.5}{\meter}$ using IGA and the matrix derivatives; black crosses denote the result of independent eigenvalue solver calls.}
\label{fig:uq_pillbox_iga}
\end{subfigure}\hfill
\begin{subfigure}[t]{0.45\textwidth}
\includegraphics[scale=1]{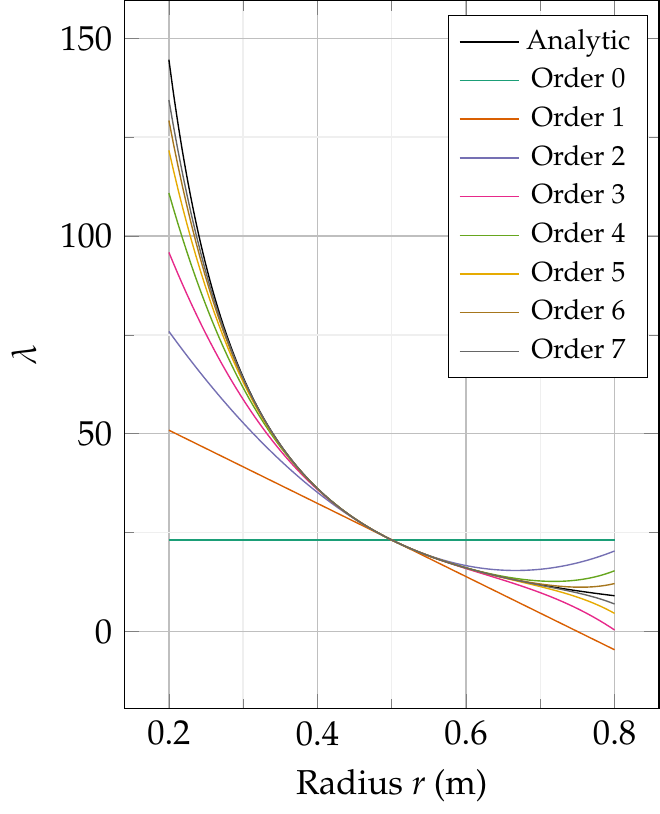}
\caption{Eigenfrequencies of the pillbox cavity estimated using the Taylor series expansion of the closed-form solution~\eqref{eq:tm010frequ} at $r_d=\SI{0.5}{\meter}$.}
\label{fig:uq_pillbox_closedform}
\end{subfigure}
\caption{Comparison of eigenvalues and the Taylor expansions at $r_d = \SI{0.5}{\meter}$.}
\label{fig:uq_pillbox}
\end{figure}

As a first benchmark example, we consider the setting introduced in~\cite{Corno_2015ac} and investigate the resonant frequency of the fundamental mode of a pillbox cavity.
We assume (unrealistically large) uncertainties of the radius $r$ and investigate the variance in the frequency with respect to the radius around the design value $r_d = \SI{0.5}{\meter}$.
We compare our solutions to the closed-form eigenvalue of the fundamental mode
\begin{equation}
    \lambda(r) = \frac{x_{01}^2}{r^2}
    \label{eq:tm010frequ}
\end{equation}
where $x_{01}$ is the first root of the Bessel function $J_0(x)$ \cite{Abramowitz_1972aa}. We assume a uniformly distributed radius $\theta_r \sim \mathcal{U}(a,b)$ where $a=\SI{0.2}{\meter}$ and $b=\SI{0.8}{\meter}$ 
and compare Taylor expansions of orders zero to seven at the design radius in the uncertain interval in Fig.~\ref{fig:uq_pillbox}.

To investigate the uncertainties, we use the mappings $\mathbf{F}_0$ and $\mathbf{F}_1$ which describe the domains corresponding to pillboxes of radius $r=a$ and $r=b$, respectively.
The two mappings are given by control points $\mathbf{P}_{0,i}$ and $\mathbf{P}_{1,i}$ of compatible dimensions and $i\in \{1,\ldots,\Ngeo\}$, which we assemble in the convex combination
\begin{equation}\label{eq:Ptconvexcomb}
    \mathbf{P}_{i}[t] = (1-t) \mathbf{P}_{0,i} + t \mathbf{P}_{1,i}.
\end{equation}
The NURBS-related weights $w_i$ from \eqref{eq:NURBS} do not change with respect to the deformation parameter $t$ in this example. The exact definitions can be found in~\cite{Ziegler_2022ac}.

In Fig.~\ref{fig:uq_pillbox_iga} we show the simulation results based on IGA and the Taylor series expansions. Let us recall that the Taylor expansions require the solution of only a single eigenvalue problem and several linear systems {\eqref{eq:eigprobderiv}} with the same system matrix. For reference, we add several solutions of a conventional eigenvalue solver based on the IGA model and mark them with black crosses. For comparison, we follow the same approach in Fig.~\ref{fig:uq_pillbox_closedform}, but here perform the expansions based on the closed-form formula~\eqref{eq:tm010frequ} and obtain almost identical curves. 

To illustrate the practical relevance of the Taylor series expansion, we compute the expected value of the fundamental mode. The closed-form solution is
\begin{equation}
    \label{eq:expected_value}
    \mathbb{E}(\lambda(\theta_r)) = \int_{-\infty}^{\infty} \lambda(\theta_r) \rho(\theta_r) \mathrm{d}\theta_r = \frac{x_{01}^2}{b-a} \left(\frac{1}{a}-\frac{1}{b}\right)
\end{equation}
with the density $\rho(\theta_r)$. Now we compute the expected values using Taylor expansions of the closed-form and numerical solution. Classical methods from uncertainty quantification, e.g., stochastic collocation \cite{Xiu_2010aa,Georg_2019aa}, would solve \eqref{eq:expected_value} by numerical quadrature and possibly a mode matching (see below). Here, this is not necessary, the integral is exactly evaluated by integrating the corresponding polynomials. In Fig.~\ref{fig:uq_pillbox_relError} we plot the relative errors with respect to the closed-form solution
\begin{equation*}
    \mathbb{E}(\lambda(r = \SI{0.5}{\meter})) \approx 36.1449
\end{equation*}
for Taylor expansions up to order fourteen and observe similar convergence rates in both cases.

\begin{figure}
\centering
\includegraphics[scale=1]{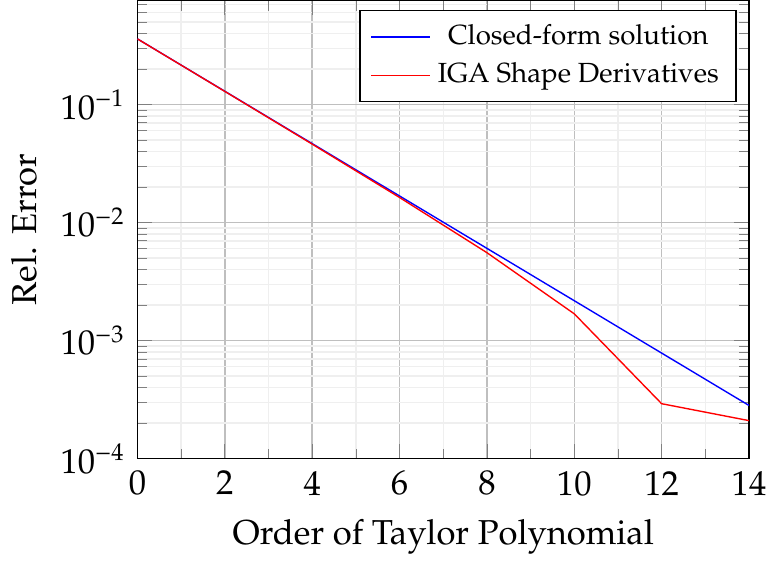}
\caption{Convergence of $\mathbb{E}(\lambda(\theta_r))$ for the Taylor expansion around $r_d$ based on the closed-form solution and the IGA shape derivatives for polynomials of degrees zero to fourteen.}
\label{fig:uq_pillbox_relError}
\end{figure}

\subsection{Eigenvalue estimation for shape morphing}

We apply the shape derivatives for the eigenmode tracking described in~\cite{Ziegler_2022ab}, where a TESLA cavity is morphed to a pillbox and the eigenmodes are tracked along the shape deformation for an automatic recognition of the eigenmodes in the TESLA cavity.
In order to properly follow the eigenmodes along the deformation, a matching procedure is necessary to identify consistent modes.
For the purpose of the multi-step method in~\cite{Ziegler_2022ab}, we estimate the eigenmode at the next morphing step by first order Taylor expansion and match the estimated eigenmode and candidate solutions based on a correlation factor~\cite{Jorkowski_2018aa}.
In this work, we propose employing higher order derivatives with respect to the shape deformation to improve the eigenmode estimations via higher order Taylor expansion, in order to increase the confidence in the eigenmode matching.
We exemplify this for the fundamental (accelerating) mode of the 9-cell TESLA cavity in Fig.~\ref{fig:trackingPlot}.

Analogously to above, we construct the two mappings~$\mathbf{F}_0 \coloneqq \mathbf{F}[t =0]$ and~$\mathbf{F}_1 \coloneqq \mathbf{F}[t =1]$, where~$\mathbf{F}_0$ maps the reference domain on the TESLA cavity and~$\mathbf{F}_1$ to the pillbox cavity with control points~$\mathbf{P}_{0,i}$ and~$\mathbf{P}_{1,i}$, respectively, of compatible dimensions.
The parametrized mapping is obtained by~\eqref{eq:Ptconvexcomb}. 
In this example, we keep the NURBS-related weights $w_i$ constant along the shape morphing. 
The definitions are stored in the repository~\cite{Ziegler_2022ac}.
\begin{figure}[htbp]
    \centering
    \includegraphics[scale=1]{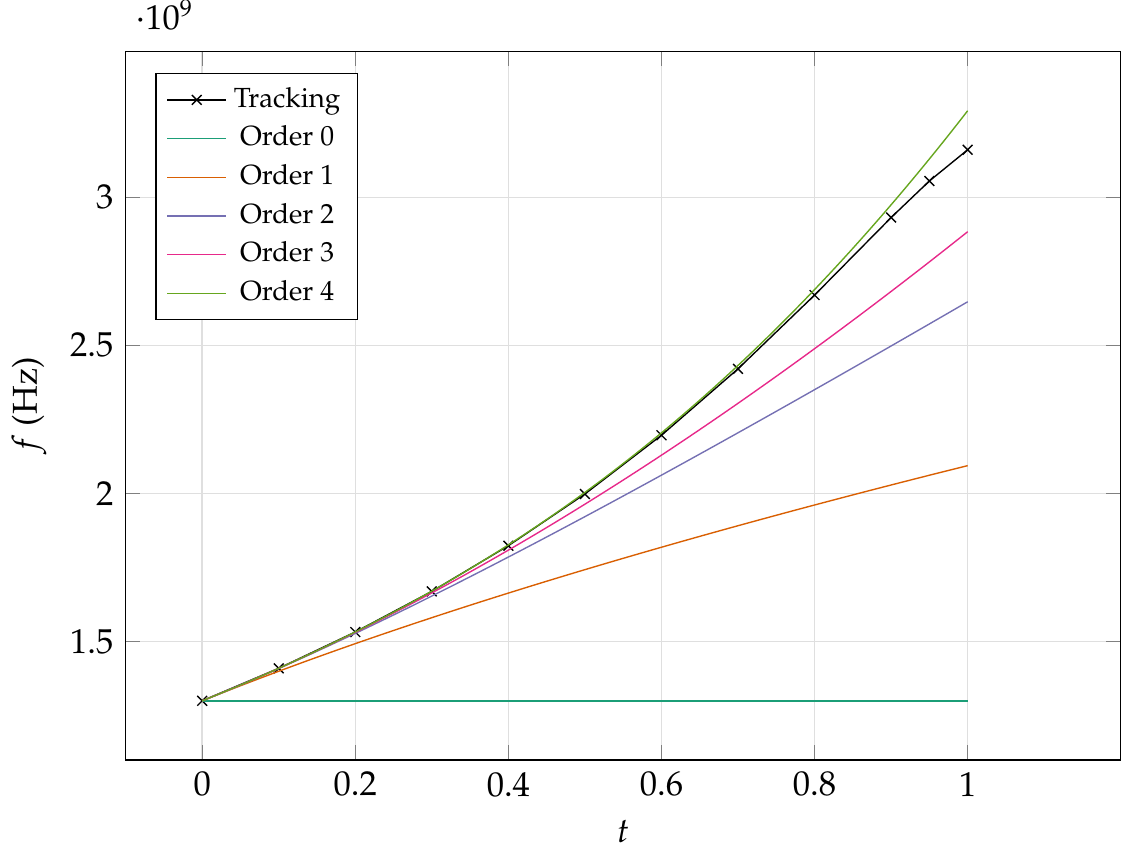}%
   
    \caption{Tracking plot from TESLA cavity ($t=0$) to pillbox cavity ($t=1$). The black line indicates the tracking result achieved with the method described in~\cite{Ziegler_2022ab}. We compare this to the Taylor expansions of orders up to four at $t=0$.}
    \label{fig:trackingPlot}
\end{figure}
The plot {in Fig.~\ref{fig:trackingPlot}} depicts the eigenfrequency along the tracking from the TESLA cavity (at $t=0$) to the pillbox cavity (at $t=1$).
We compare the results of the multi-step tracking method, for which the results of each solved eigenvalue problem are marked in the plot with black crosses, with the estimations we perform by Taylor expansions of orders zero to four at the initial point $t=0$.
It can be clearly observed that the higher order Taylor expansions significantly improve the estimation of the eigenvalue along the tracking curve.
\begin{figure}
\begin{subfigure}[t]{0.45\textwidth}
\centering
\includegraphics[scale=1]{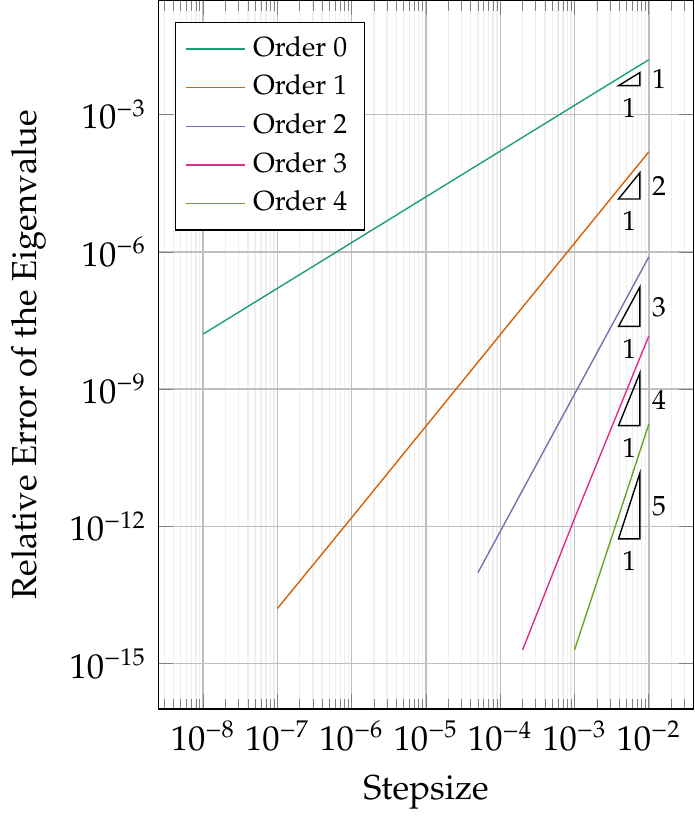}
\caption{Relative error of the Taylor expansions on the eigenvalue of the accelerating eigenmode.}
\label{fig:trackingErrorEigenvalues}
\end{subfigure}\hfill
\begin{subfigure}[t]{0.45\textwidth}
\centering
\includegraphics[scale=1]{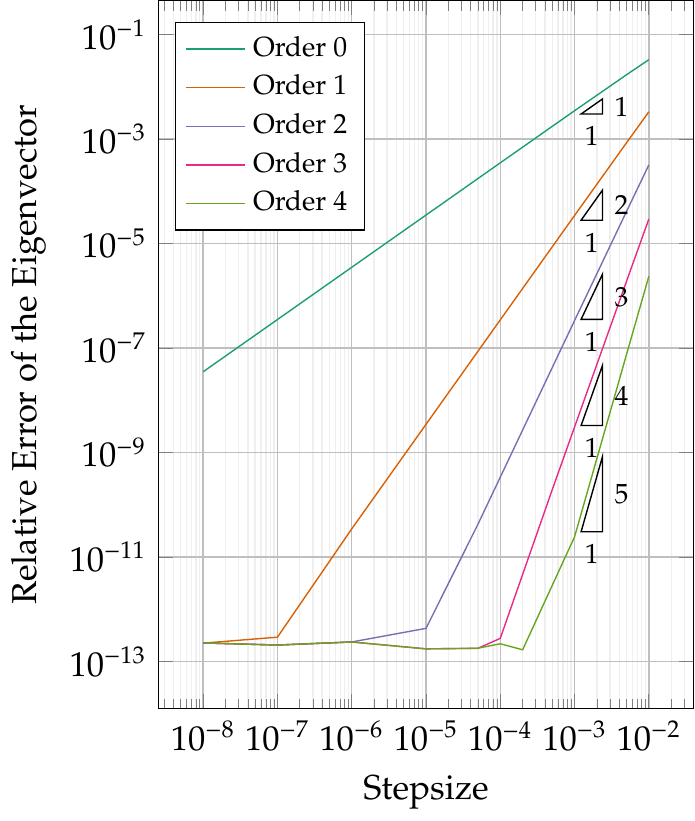}
\caption{Relative error of the Taylor expansions on the eigenvector of the accelerating eigenmode.}
\label{fig:trackingErrorEigenvectors}
\end{subfigure}
\caption{Relative errors of the Taylor expansions on the accelerating eigenmode compared to the numerical solutions at each point.}
\label{fig:trackingError}
\end{figure}

In the convergence plots in Fig.~\ref{fig:trackingError} we investigate the relative error of the Taylor expansions on the eigenvalues and on the eigenvectors, respectively, to the numerical solutions for small step sizes {around $t=0$, i.e., for the TESLA cavity}.
To compute the relative error on the eigenvectors, we take the Euclidean norm of the error and normalize by the norm of the numerical eigenvector at the corresponding step.
In both convergence plots, we notice that the relative errors converge with the expected rates. 

\section{Conclusion}\label{sec:conclusion}
{In this paper, we derived higher order sensitivities of the eigenpairs for the Laplace and the Maxwell eigenvalue problem. 
We have shown how the system matrices for a parameter-dependent domain can be computed on a reference domain of choice, as well as their derivatives with respect to the deformation.
Discretizing the problem with IGA makes the formulation of the domain transformation straightforward, also when considering large shape deformations.
We have shown that our formulation also enables the computation of higher derivatives in closed-form.
Enhancing the Taylor series expansions by higher order terms allows for efficient eigenvalue tracking along a shape morphing as well as for efficient uncertainty quantification. 
}

\section*{Acknowledgment}
The authors thank Philipp Jorkowski and Rolf Schuhmann for their support and the many fruitful discussions, as well as for supplying us with their implementation of the eigenvalue derivative computation \eqref{eq:eigprobderiv}. {Furthermore, the authors thank Annalisa Buffa and Rafael Vázquez for the fruitful discussions.}

This work is supported by the Graduate School CE within the Centre for Computational Engineering at Technische Universität Darmstadt, the Federal Ministry of Education and Research (BMBF) and the state of Hesse as part of the NHR Program and the SFB TRR361/F90 CREATOR (grant number 492661287) funded by the German and Austrian Research Foundations DFG and FWF. Moreover, support by the FWF funded project P32911 and DFG funded project SCHO 1562/6-1 is acknowledged.

\newcommand{\matA}{\ensuremath{\mathbf{X}}}
\newcommand{\matB}{\ensuremath{\mathbf{Y}}}

\appendix
\section{Proof of Lemma \ref{lem_derivJacobi}}\label{apdx:proof_dAdt}
\begin{proof}
    A proof of \eqref{lem_derivJacobi} can be found in, e.g. \cite{Sturm_2015ab}. We give the proof here for the sake of completeness.
    
    In order to see identity \eqref{eq_ddt_partialFinv}, we observe that, for two matrices $\matA$, $\matB$ with $\matA$ invertible, $\partial \mathrm{inv}(\matA)(\matB) = -\matA^{-1} \matB \matA^{-1}$ where $\textrm{inv}(\matA) = \matA^{-1}$. Here, $\partial \mathrm{inv}(\matA)(\matB)$ is to be understood as the derivative of the matrix inversion operator evaluated at a matrix $\matA$ in the direction of a matrix $\matB$. Thus we get
    \begin{align*}
        \ddt ( \textrm{inv}(\partial \maptil[t]) ) =& \partial \textrm{inv}( \partial \maptil[t]) \left( \ddt\partial \maptil[t]\right) = -\partial \maptil[t]^{-1} \left( \ddt\partial \maptil[t]\right) \partial \maptil[t]^{-1}.
    \end{align*}
    
     For identity \eqref{eq_ddt_det}, we use the Jacobi formula for differentiating determinants
        \begin{align*}
            \ddt \det(\partial \maptil[t]) = \tr\left(\textrm{adj}(\partial \maptil[t]) \ddt \partial \maptil[t]\right),
        \end{align*}
    where $\textrm{adj}(\matA)$ denotes the adjugate of a matrix $\matA$. The well-known formula for the adjugate of an invertible matrix, $\textrm{adj}(\matA) =  \det(\matA) \matA^{-1}$, now yields
     \begin{align*}
        \ddt \det(\partial \maptil[t]) = \tr \left(\det(\partial \maptil[t]) (\partial \maptil[t])^{-1} \ddt \partial \maptil[t]\right) 
        =\tr\left( \ddt \partial \maptil[t] (\partial \maptil[t])^{-1} \right) \det(\partial \maptil[t])
    \end{align*}
    where we also used that $\tr(\alpha \matA) = \alpha \tr(\matA)$ and $\tr(\matA\matB ) = \tr(\matB\matA)$.
    
    For \eqref{eq_ddt_A}, we get by the chain rule
    \begin{align*}
        \ddt\mathbf A[t] =& \ddt \left( \det(\partial \maptil[t]) \partial \maptil[t]^{-1} \partial \maptil[t]^{-\top} \right) \\
        =&   \ddt \left( \det(\partial \maptil[t]) \right) \partial \maptil[t]^{-1} \partial \maptil[t]^{-\top} +  \det(\partial \maptil[t]) \ddt \left( \partial \maptil[t]^{-1}\right) \partial \maptil[t]^{-\top}  \\
        &+  \det(\partial \maptil[t]) \partial \maptil[t]^{-1}  \ddt \left(\partial \maptil[t]^{-\top} \right)
    \end{align*}
    For the first term, we get by \eqref{eq_ddt_det}
    \begin{align*}
        \ddt \left(\det(\partial \maptil[t]) \right) \partial \maptil[t]^{-1} \partial \maptil[t]^{-\top} = \mbox{tr} \left( \left(\ddt \partial \maptil[t] \right) \partial \maptil[t]^{-1} \right) \mathbf A[t].
    \end{align*}
    For the second term we get by \eqref{eq_ddt_partialFinv}
    \begin{align*}
         \det(\partial \maptil[t]) \ddt \left( \partial \maptil[t]^{-1}\right) \partial \maptil[t]^{-\top} =& - \det(\partial \maptil[t])  (\partial \maptil[t])^{-1} \left(\ddt \partial \maptil[t] \right) (\partial \maptil[t])^{-1} \partial \maptil[t]^{-\top} \\
         =& - (\partial \maptil[t])^{-1} \left(\ddt \partial \maptil[t] \right) \mathbf A[t].
    \end{align*}
    Similarly, we get for the third term
    \begin{align*}
        \det(\partial \maptil[t]) \partial \maptil[t]^{-1}  \ddt \left(\partial \maptil[t]^{-\top} \right) =& -\det(\partial \maptil[t]) \partial \maptil[t]^{-1}   (\partial \maptil[t])^{-\top} \left(\ddt \partial \maptil[t] \right)^\top (\partial \maptil[t])^{-\top} \\
        =& -\mathbf A[t] \left(\ddt \partial \maptil[t]\right)^\top (\partial \maptil[t])^{-\top} \\
        =& - \left( (\partial \maptil[t])^{-1} \ddt \partial \maptil[t] \mathbf A[t] \right)^\top,
    \end{align*}
    where we used the fact that $\mathbf A[t] = \mathbf A[t] ^\top$ in the last step.
\end{proof}

\section{Proof of Lemma \ref{lem:d2dt2C}}\label{apdx:proof_d2Cdt2}
\begin{proof}
{The first identity follows from \eqref{eq_dDet} and the chain rule.}
In order to compute the second derivative of $\mathbf{C}(t)$, we use that $\ddt ( \tr (\mathbf{X}))) = \tr(\ddt \mathbf{X})$. 
Hence, it holds that
\begin{align*}
\ddt \tr(\dV \dGinv ) = \tr (\dV \ddt \dGinv) = -\tr\left[(\dV \dGinv) (\dV \dGinv)\right],
\end{align*}
and, by the product rule,
\begin{align*}
\frac{\mathrm{d}^2}{\mathrm{d}t^2}\left[ \frac{1}{\det(\dG)} \right] =& \ddt \left[ - \frac{1}{\det(\dG)} \tr(\dV \dG^{-1} ) \right] \\
=& - \left[- \frac{1}{\det(\dG)} \tr(\dV \dG^{-1} ){^2} - {\frac{1}{\det(\dG)}} \tr\left[(\dV \dGinv) (\dV \dGinv)\right]  \right]\\
=& \frac{1}{\det(\dG)} \left[ \tr(\dV \dG^{-1} ){^2} + \tr\left[(\dV \dGinv) (\dV \dGinv)\right]\right].
\end{align*}
Then it holds
\begin{align*}
\frac{\mathrm{d}^2}{\mathrm{d}t^2} \mathbf{C}(t) =& - \ddt(\tr(\dV \dGinv )) \mathbf{C}(t) - \tr(\dV \dGinv ) \ddt \mathbf{C}(t) \\
&+ \ddt\left(\frac{1}{\det(\dG)}\right) \dV^\top \dG + \frac{1}{\det(\dG)} \dV^\top \ddt \dG \\
&+ \ddt\left(\frac{1}{\det(\dG)}\right) \dG^\top \dV + \frac{1}{\det(\dG)}\ddt \dG^\top  \dV \\
=& - \ddt(\tr(\dV \dGinv )) \mathbf{C}(t) - \tr(\dV \dGinv ) \ddt \mathbf{C}(t) \\
&+ \ddt\left(\frac{1}{\det(\dG)}\right) \left[ \dV^\top \dG + \dG^\top  \dV \right] \\
&+ \frac{1}{\det(\dG)} \left[\dV^\top \ddt \dG + \ddt \dG^\top  \dV \right] \\
=& \tr\left[(\dV \dGinv) (\dV \dGinv)\right] \mathbf{C}(t) -  \tr(\dV \dGinv ) \ddt \mathbf{C}(t)\\
&- \frac{1}{\det(\dG)} \mbox{tr}(\dV \dG^{-1} ) \cdot \left[ \dV^\top \dG + \dG^\top  \dV \right]\\
&+ \frac{1}{\det(\dG)} \cdot \left[ \dV ^\top \dV + \dV ^\top \dV \right].
\end{align*}
\end{proof}

\section{Proof of Lemma \ref{lem:d3dt3C}}\label{apdx:proof_dnCdtn}
\begin{proof}
For the computation of higher order derivatives, we apply the General Leibniz rule:
\begin{equation}
(f_1 f_2 \dots f_m)^{(n)} = \sum_{k_1+k_2+ \dots + k_m = n} {\binom{n}{k_1, k_2, \dots , k_m}}  \prod_{1 \leq t \leq m} f_t^{(k_t)}
\end{equation}
We define $\mathbf{C}(t):= f_1 \cdot f_2 \cdot f_3$, hence
\begin{equation*}
f_1 = \funi, \qquad f_2 = \fii,  \qquad f_3 = \fiii
\end{equation*}
and we compute the derivatives individually
\begin{align*}
f_1 &= \funi , \qquad
\ddt f_1 = \fid\\
\frac{\mathrm{d}^2}{\mathrm{d}t^2} f_1 &= \fidd
\end{align*}
\begin{equation*}
f_2 = \fii, \qquad \ddt f_2 = \fiid, \qquad \frac{\mathrm{d}^2}{\mathrm{d}t^2} f_2 = 0,
\end{equation*}
and
\begin{equation*}
f_3 = \fiii, \qquad \ddt f_3 = \fiiid, \qquad \frac{\mathrm{d}^2}{\mathrm{d}t^2} f_3 = 0.
\end{equation*}
To compute the derivatives with the general Leibniz rule, we must first find all $m$-tuples, that sum up to $n$. 
Since the second derivatives of the functions $f_2$ and $f_3$ are zero, the tuples of the corresponding terms can be neglected, since the terms vanish. 
This can be seen in the following table.
\begin{table}[htbp]
\centering
\begin{tabular}{l|l|l}
& tuples & vanishing tuples, since $\f22 = \f32 = 0$ \\ \hline
n = 1 & (1,0,0), (0,1,0), (0,0,1) & \\
n = 2 & (0,1,1), (1,0,1), (1,1,0), (2,0,0);  & (0,2,0), (0,0,2)\\
n = 3 & (3,0,0), (1,1,1), (2,1,0), (2,0,1)  & (0,2,1), (1,0,2), (0,1,2), (0,3,0), (0,0,3), (1,2,0)\\
\end{tabular}
\end{table}
If we simplify the general Leibniz rule for the case $m=3$, then it writes
\begin{equation}
(f_1 f_2 f_3)^{(n)} = \sum_{k_1+k_2+k_3 = n} {\binom{n}{k_1, k_2, k_3}}  \prod_{1 \leq t \leq 3} f_t^{(k_t)}.
\end{equation}
We first set $n=1$ and $n=2$ compute the first and second derivative, respectively, as
\begin{align*}
\ddt \mathbf{C}(t) &= \frac{1!}{1!0!0!} \f11 \f20 \f30 + \frac{1!}{0!1!0!} \f10 \f21 \f30 + \frac{1!}{0!0!1!} \f10 \f20 \f31 \\
&=  \f11 \f20 \f30 + \f10 \f21 \f30 + \f10 \f20 \f31 \\
&= \fid \fii \fiii + \funi \fiid \fiii + \funi \fii \fiiid.
\end{align*}
and
\begin{align*}
\frac{\mathrm{d}^2}{\mathrm{d}t^2} \mathbf{C}(t) =& \frac{2!}{0!1!1!} \f10 \f21 \f31 + \frac{2!}{1!0!1!} \f11 \f20 \f31 + \frac{2!}{1!1!0!} \f11 \f21 \f30 + \frac{2!}{2!0!0!} \f12 \f20 \f30\\
=& 2 \f10 \f21 \f31 + 2 \f11 \f20 \f31 + 2 \f11 \f21 \f30 + \f12 \f20 \f30\\
=& 2 \cdot \funi \cdot \fiid \fiiid  {- 2 \cdot \frac{1}{\det(\dG)} \tr(\dV \dG^{-1} )} \fii \fiiid \\
&- 2 \cdot {\frac{1}{\det(\dG)} \tr(\dV \dG^{-1} )} \fiid \fiii \\
&+ \fidd \fii \fiii.
\end{align*}
This confirms the previous results. 

Analogously, for the third derivative it holds
\begin{align*}
\frac{\mathrm{d}^3}{\mathrm{d}t^3} \mathbf{C}(t) =& \frac{3!}{3!0!0!} \f13 \f20 \f30 + \frac{3!}{1!1!1!} \f11 \f21 \f31 + \frac{3!}{2!1!0!} \f12 \f21 \f30 + \frac{3!}{2!0!1!} \f12 \f20 \f31 \\
=& \f13 \f20 \f30 + 6 \f11 \f21 \f31 + 3 \f12 \f21 \f30 + 3 \f12 \f20 \f31 \\
=& \Bigg(-\tr(\dV\dGinv)\cdot \fidd   \\
& -2 \cdot \frac{1}{\det(\dG)} \cdot \tr(\dV \dGinv)\cdot \tr(\dV \dGinv \dV \dGinv) \\
& - \frac{1}{\det(\dG)} \cdot \tr\Big[(\dV \dGinv \dV \dGinv)(\dV \dGinv)+\dV \dGinv \dV \dGinv \dV \dGinv\Big] \Bigg) \fii  \fiii \\
&- 6 \cdot {\frac{1}{\det(\dG)} \tr(\dV \dG^{-1} )} \fiid \fiiid \\
&+ 3 \cdot \fidd \fiid \fiii\\
&+ 3 \cdot \fidd \fii \fiiid.
\end{align*}
\end{proof}

\end{document}